\newtheorem{ob}{Observation}
\title{Improved Complexity Results on $k$-Coloring $P_t$-Free Graphs
\thanks{An extended abstract of this paper has appeared in the proceedings of MFCS 2013.}}
\author{Shenwei Huang}
\institute{School of Computing Science\\
Simon Fraser University, Burnaby B.C., V5A 1S6, Canada\\
\email{shenweih@sfu.ca}}
\date{}
\begin{document}

\maketitle

\begin{abstract}
A graph is $H$-free if it does not contain an induced subgraph isomorphic
to $H$. We denote by $P_k$ and $C_k$ the path and the cycle on $k$ vertices, respectively.
In this paper, we prove that
4-COLORING is NP-complete for $P_7$-free graphs, and that 5-COLORING is NP-complete
for $P_6$-free graphs. These two results improve all previous results on $k$-coloring
$P_t$-free graphs, and almost complete the classification of complexity of $k$-COLORING $P_t$-free graphs
for $k\ge 4$ and $t\ge 1$, leaving as the only missing case 4-COLORING $P_6$-free graphs.
We expect that 4-COLORING is polynomial time solvable for $P_6$-free graphs;
in support of this, we describe a polynomial time algorithm for 4-COLORING $P_6$-free graphs
which are also $P$-free, where $P$ is the graph obtained from $C_4$ by adding a new vertex
and making it adjacent to exactly one vertex on the $C_4$.
\end{abstract}

\section{Introduction}

We consider computational complexity issues related to vertex
coloring problems restricted to $P_k$-free graphs. It is well known that
the usual $k$-COLORING problem is NP-complete for any fixed $k\ge 3$. Therefore,
there has been considerable interest in studying its complexity when restricted
to certain graph classes. One of the most remarkable results in this respect
is that the chromatic number of perfect graphs can be decided in polynomial time. More information
on this classical result and related work on coloring problems restricted to graph
classes can be found in several surveys, e.g, \cite{Randerath,Tuza}.

We continue the study of $k$-COLORING problem for
$P_t$-free graphs. This problem has been given wide attention in recent years
and much progress has been made through substantial efforts by different groups
of researchers \cite{Fomin,2P_3,Broersma,Dabrowski,Short Cycle,Hoang,Lozin,Kral,Randerath 2007,Schiermeyer,Woe}.
We summarize these results and explain our new results below.

We refer to \cite{Bondy} for standard graph theory terminology
and \cite{Garey} for terminology on computational complexity.
Let $G=(V,E)$ be a graph and $\mathcal{H}$ be a set of graphs.
We say that $G$ is {\em $\mathcal{H}$-free} if $G$ does not contain
any graph $H\in \mathcal{H}$ as an induced subgraph. In particular,
if $\mathcal{H}=\{H\}$ or $\mathcal{H}=\{H_1,H_2\}$ , we simply
say that $G$ is $H$-free or $(H_1,H_2)$-free. Given any positive integer $t$,
let $P_t$ and $C_t$ be the path and cycle on $t$ vertices, respectively.
A {\em linear forest} is a disjoint union of paths. We denote by $G+H$
the disjoint union of two graphs $G$ and $H$.
We denote the complement of $G$ by $\bar{G}$.
The neighborhood of a vertex $x$ in $G$ is denoted by $N_G(x)$, or simply $N(x)$
if the context is clear. Given a vertex subset $S\subseteq V$ we denote by $N_S(x)$
the neighborhood of $x$ in $S$, i.e., $N_S(x)=N(x)\cap S$. For two disjoint
vertex subsets $X$ and $Y$ we say that $X$ is {\em complete,} respectively {\em anti-complete},
to $Y$ if every vertex in $X$ is adjacent, respectively non-adjacent, to every vertex in $Y$.
The {\em girth} of a graph $G$ is the length of the shortest cycle.

A {\em $k$-coloring} of a graph $G=(V,E)$ is a mapping $\phi:V\rightarrow \{1,2,\ldots,k\}$
such that $\phi(u)\neq \phi(v)$ whenever $uv\in E$. The value $\phi(u)$ is usually referred to
as the {\em color} of $u$ under $\phi$. We say $G$ is {\em $k$-colorable} if $G$ has a $k$-coloring.
The problem $k$-COLORING asks if an input graph admits an $k$-coloring.
The LIST $k$-COLORING problem asks if an input graph $G$ with lists $L(v)\subseteq \{1,2,\ldots,k\}$,
$v\in V(G)$, has a coloring $\phi$ that {\em respects} the lists, i.e.,
$\phi(v)\in L(v)$ for each $v\in V(G)$.

In the {\em pre-coloring extension of $k$-coloring} we assume that
(a possible empty) subset $W\subseteq V$ of $G$ is pre-colored with
$\phi_W:W\rightarrow \{1,2,\ldots,k\}$ and the question is whether
we can extend $\phi_W$ to a $k$-coloring of $G$. We denote the problem
of pre-coloring extension of $k$-coloring by $k^*$-COLORING.
Note that $k$-COLORING is a special case of $k^*$-COLORING, which in turn
is a special case of LIST $k$-COLORING.

Kami\'{n}ski and Lozin \cite{Lozin} showed that, for any fixed $k\ge 3$,
the $k$-COLORING problem is NP-complete for the class of graphs of girth
at least $g$ for any fixed $g\ge 3$.
Their result has the following immediate consequence.

\begin{theorem}[\cite{Lozin}]\label{th1}
For any $k\ge 3$, the $k$-COLORING problem is NP-complete for the class of $H$-free graphs whenever $H$
contains a cycle.
\end{theorem}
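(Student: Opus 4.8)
The plan is to deduce this as a direct corollary of the Kamiński--Lozin theorem by observing that, once $H$ contains a cycle, every graph of sufficiently large girth is automatically $H$-free; the NP-completeness then transfers from the narrower girth-restricted class to the wider class of $H$-free graphs. First I would fix the relevant constant. Since $H$ contains a cycle, it is not a forest, so $H$ has a cycle of some length $\ell$ with $3\le \ell \le |V(H)|$. I then set $g=|V(H)|+1$, which is a fixed integer depending only on $H$ and satisfies $g>\ell$.

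Next I would verify the containment of graph classes. Suppose $G$ has girth at least $g$ and, towards a contradiction, that $G$ contains $H$ as an induced subgraph. The embedding realizing $H$ as an induced subgraph preserves adjacency, so every edge of the cycle of $H$ maps to an edge of $G$ on distinct vertices; hence $G$ contains a cycle of length $\ell\le |V(H)|<g$, contradicting the assumption that $G$ has girth at least $g$. Therefore every graph of girth at least $g$ is $H$-free, i.e., the class of graphs of girth at least $g$ is a subclass of the class of $H$-free graphs.

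Finally I would transfer the hardness. By the Kamiński--Lozin result applied with this particular $g$, the $k$-COLORING problem is NP-complete on graphs of girth at least $g$; in particular, the polynomial-time reduction underlying their proof outputs only instances of girth at least $g$, and by the previous paragraph every such instance is $H$-free. Consequently the very same reduction witnesses NP-hardness of $k$-COLORING restricted to $H$-free graphs, and since a proposed $k$-coloring can be checked in polynomial time, membership in NP is immediate. This yields NP-completeness on $H$-free graphs. The only point requiring any care—and it is a very mild one, which is why the statement is phrased as an immediate consequence—is ensuring that the cycle witnessing ``$H$ contains a cycle'' survives the induced embedding into $G$; this is guaranteed by the adjacency-preservation of induced subgraphs, so no genuine obstacle arises.
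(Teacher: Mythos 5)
Your proposal is correct and follows exactly the route the paper intends: the paper states this as an immediate consequence of the Kami\'{n}ski--Lozin girth result, and your argument (choosing $g=|V(H)|+1$ so that every graph of girth at least $g$ is $H$-free, then transferring NP-hardness from the subclass to the superclass) is precisely the omitted verification. No issues.
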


Holyer \cite{Holyer} showed that 3-COLORING is NP-complete for line graphs.
Later, Leven and Galil \cite{Leven} extended this result by showing that
$k$-COLORING is also NP-complete for line graphs for $k\ge 4$. Because
line graphs are claw-free, these two results together have the following consequence.

\begin{theorem}[\cite{Holyer,Leven}]\label{th2}
For any $k\ge 3$, the $k$-COLORING problem is NP-complete for the class of $H$-free graphs whenever $H$
is a forest with a vertex of degree at least 3.
\end{theorem}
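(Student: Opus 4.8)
The plan is to combine the claw-freeness of line graphs with the hardness results of Holyer and of Leven and Galil. The key structural observation is that any forest $H$ with a vertex of degree at least $3$ contains the claw $K_{1,3}$ as an induced subgraph. Indeed, if $v$ is a vertex of degree at least $3$ in $H$, choose three neighbors $a,b,c$ of $v$; since $H$ is acyclic, no two of $a,b,c$ are adjacent, as an edge between any two of them would close a triangle through $v$. Hence $\{v,a,b,c\}$ induces a $K_{1,3}$ in $H$.

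From this I would deduce, by contraposition, that every claw-free graph is $H$-free: if a graph $G$ contained $H$ as an induced subgraph, it would also contain the induced claw sitting inside $H$, contradicting claw-freeness. In particular, since every line graph is claw-free, every line graph is $H$-free, so the class of line graphs is contained in the class of $H$-free graphs.

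It then remains to transfer the hardness. By Holyer for $k=3$ and by Leven and Galil for $k\ge 4$, the $k$-COLORING problem is NP-complete on line graphs. Since $k$-COLORING is clearly in NP on any class of graphs, and since every hard instance produced for line graphs is simultaneously an $H$-free instance, the same reduction shows that $k$-COLORING is NP-complete on $H$-free graphs.

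The argument is essentially a containment-of-classes reduction, so I expect no serious difficulty. The one point requiring genuine care is the structural claim that a vertex of degree at least $3$ in a forest yields an induced claw; everything else is formal, resting on the fact that hardness on a subclass immediately lifts to any superclass.
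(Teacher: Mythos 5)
Your proposal is correct and matches the paper's argument exactly: the paper likewise observes that line graphs are claw-free, that a forest with a vertex of degree at least $3$ contains an induced $K_{1,3}$ (so claw-free implies $H$-free), and then transfers the NP-completeness of $k$-COLORING on line graphs from Holyer ($k=3$) and Leven--Galil ($k\ge 4$). No gaps; the only substantive step, the induced-claw observation, is argued correctly.
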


Due to Theorems \ref{th1} and \ref{th2}, only the case in which $H$ is a linear forest remains.
In this paper we focus on the case where $H$ is a path. The $k$-COLORING problem is trivial for $P_t$-free
graphs when $t\le 3$. The first non-trivial case is $P_4$-free graphs.
It is well known that $P_4$-free graphs (also called {\em cographs})
are perfect and therefore can be colored optimally in polynomial time by Gr\"{o}tschel et al. \cite{Lovasz}.
Alternatively, one can color cographs using the {\em cotree representation} of a cograph, see, e.g., \cite{Randerath}.
Ho\`{a}ng et al. \cite{Hoang} developed an elegant recursive algorithm showing that the
$k$-COLORING problem can be solved in polynomial time for $P_5$-free graphs for any
fixed $k$.

Woeginger and Sgall \cite{Woe} proved that 5-COLORING is NP-complete for $P_8$-free graphs
and 4-COLORING is NP-complete for $P_{12}$-free graphs.
Later, Le et al. \cite{Randerath 2007} proved that 4-COLORING is NP-complete for $P_{9}$-free graphs.
The sharpest results so far are due to Broersma et al. \cite{Fomin,Broersma}.
\begin{theorem}[\cite{Broersma}]\label{4-COL}
4-COLORING is NP-complete for $P_8$-free graphs and $4^*$-COLORING is NP-complete for $P_7$-free graphs.
\end{theorem}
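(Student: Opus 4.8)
The plan is to establish membership in NP and then NP-hardness by a polynomial reduction from a suitable constraint-satisfaction problem. Membership is immediate: a claimed coloring — respecting the precoloring, in the $4^*$-COLORING case — can be verified in linear time, so both problems are in NP. The substance is the hardness reduction, and I would aim for a single construction that yields both claims, reducing from a variant of satisfiability such as NOT-ALL-EQUAL 3-SAT or ordinary 3-SAT, whose coloring-like constraints translate naturally into adjacency constraints among a bounded palette of colors.

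For the $4^*$-COLORING / $P_7$-free claim I would precolor a small set of \emph{reference vertices} with fixed colors, which effectively equips every other vertex with a list drawn from $\{1,2,3,4\}$. Then I would introduce, for each Boolean variable, a constant-size \emph{truth gadget} whose list-respecting proper colorings correspond bijectively to the two truth values, and, for each clause, a \emph{clause gadget} that extends to a proper coloring if and only if the clause is satisfied; literal occurrences are wired to the appropriate variable gadget. The guiding design principle is to route all long connections through the precolored reference vertices together with a few universal (dominating) vertices, so that any would-be long induced path is forced to revisit a high-degree hub, thereby acquiring a chord and collapsing in length.

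For the plain 4-COLORING / $P_8$-free claim I would replace the precolored reference vertices by a fixed \emph{palette gadget} — for instance a $K_4$-based gadget together with forcing vertices — that pins down four colors up to permutation, and then reuse the variable and clause gadgets from the first construction. Because simulating the fixed colors costs one extra layer of vertices, I expect the shortest induced paths one must forbid to grow by one, which would explain precisely why the bound degrades from $P_7$ to $P_8$ once the precoloring is removed.

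The two things to verify are correctness and the path-length bound. Correctness — that the graph is (pre-)$4$-colorable iff the formula is satisfiable — should follow gadget by gadget from the bijections above and will be routine once the gadgets are fixed. The main obstacle, where essentially all of the work lies, is proving $P_t$-freeness: one must argue that no induced $P_8$ (respectively $P_7$) exists, by a structural case analysis over how an induced path can traverse the variable gadgets, the clause gadgets, and the hub vertices. The crux is to choose the inter-gadget adjacencies — typically making blocks pairwise complete or anti-complete and attaching dominating vertices — densely enough that every sufficiently long path acquires a chord, while keeping each gadget sparse enough internally to preserve colorability; balancing these two competing requirements is the heart of the construction.
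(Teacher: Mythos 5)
This statement is not proved in the paper at all: it is quoted as Theorem~\ref{4-COL} from Broersma et al.\ \cite{Broersma} and used only as background, so there is no in-paper proof to match your proposal against. What the paper does prove, in Section~2, is a strictly stronger result (Theorem~\ref{Mine 2}: 4-COLORING is NP-complete already for $P_7$-free graphs, which are in particular $P_8$-free). Its route is worth comparing with yours: a single reduction from 3-SAT in which each variable is a $K_2$ plus one extra vertex $d_i$, each clause is a copy of a fixed ``nice $(k-1)$-critical'' $P_t$-free graph $H$ (for $k=4$, $t=7$ one takes $H=C_7$ with three pairwise nonadjacent marked vertices), and the $U$-type interior of every clause gadget is made complete to all variable-level vertices. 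Colorability-iff-satisfiability is Lemma~\ref{iff}, and $P_t$-freeness is Lemma~\ref{P_t-free}, proved by showing any long induced path must avoid the $U$-type hub and then cannot survive inside the sparse remainder. No precoloring and no separate palette gadget are needed, which is exactly the layer your outline spends to pass from $4^*$-COLORING to 4-COLORING.

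As a proof, however, your proposal has a genuine gap: it is a strategy statement, not an argument. No variable gadget, clause gadget, or palette gadget is actually specified, so neither direction of the correctness equivalence can be checked, and the $P_8$-freeness (respectively $P_7$-freeness) claim---which you yourself identify as ``where essentially all of the work lies''---is entirely deferred. The difficulty is not cosmetic: the tension you name between making inter-gadget adjacencies dense enough to chord every long path and keeping gadgets 4-colorable is precisely what forces the specific choices in \cite{Broersma} and in this paper (e.g., the requirement that the clause gadget be $k$-critical \emph{and} contain three independent vertices outside a maximum clique, and the observation that no such $P_6$-free 3-critical graph exists, which is why the method stalls at $P_6$ for $k=4$). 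Without exhibiting concrete gadgets and carrying out the induced-path case analysis, the proposal does not establish either half of the theorem.
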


\begin{theorem}[\cite{Fomin}]\label{5-COL}
6-COLORING is NP-complete for $P_7$-free graphs and $5^*$-COLORING is NP-complete for $P_6$-free graphs.
\end{theorem}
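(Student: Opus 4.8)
The plan is to establish both statements by polynomial reductions from a satisfiability problem, since membership in NP is immediate: a $k$-coloring (respectively, an extension of a precoloring) is a certificate that can be checked in linear time by inspecting every edge. For the hardness I would reduce from a suitable variant of 3-SAT; NOT-ALL-EQUAL 3-SAT is the most convenient here, because the symmetry between the two ``truth colors'' matches the symmetry of a color swap, so that clause gadgets only need to forbid monochromatic configurations rather than distinguish a designated polarity.

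First I would fix a small \emph{palette backbone}. For the $5^*$-COLORING, $P_6$-free statement I would precolor a constant-size set of vertices that pins down how the five colors interact, so that the lists effectively available at the gadget vertices are tightly constrained; the freedom granted by precoloring is exactly what lets the gadgets stay small and of short diameter. For each variable $x_i$ I would introduce a literal gadget---a pair of vertices for $x_i$ and $\overline{x_i}$ whose admissible colors, given the backbone, encode a Boolean value---and for each clause a gadget that extends to a valid coloring precisely when its literals are not all set to the same value. Correctness is then routine bookkeeping: a satisfying assignment dictates the colors of the literal vertices, which propagate to a full coloring, and conversely any valid coloring restricts each literal pair to a consistent value, so the two problems share the same yes-instances.

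The main obstacle, and the step that dictates every design choice, is proving that the constructed graph is $P_6$-free (respectively $P_7$-free). Long induced paths are precisely what gadget-based reductions tend to create: a path can leave a variable gadget, cross into a clause gadget, and continue into a third, and one must prevent this from ever producing six (respectively seven) consecutive vertices with no chord. I would control this by hanging all gadgets off a complete-multipartite-like backbone: since complete multipartite graphs contain no induced $P_4$, any induced path can use only a bounded number of backbone vertices before acquiring a chord, and the remaining vertices must come from a constant number of gadgets that are mutually densely connected through the backbone. The verification then reduces to a finite case analysis on how an induced path can distribute its vertices among the backbone and at most a couple of adjacent gadgets, checking in each case that the path is forced to close up before reaching length six or seven. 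Getting these adjacencies exactly right---dense enough to kill long paths, yet sparse enough to preserve the intended coloring constraints---is the delicate part.

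Finally, for the $6$-COLORING, $P_7$-free statement I would either repeat the same scheme with one extra color, or, more economically, \emph{lift} the $5^*$-COLORING construction: replace the precolored palette by a genuine palette gadget (a clique together with apex-like dominating vertices) that forces the intended colors using the sixth color, absorbing the one extra unit of path length this costs into the extra room afforded by $P_7$ over $P_6$. I expect the lifting to preserve $P_t$-freeness, since the palette gadget is again a clique-plus-apex contributing no long induced subpaths; the only thing to recheck is that the added vertices do not bridge two gadgets into a fresh induced $P_7$.
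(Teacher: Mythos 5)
The statement you set out to prove is not actually proved in this paper: it is quoted from Broersma, Fomin, Golovach and Paulusma \cite{Fomin} as Theorem \ref{5-COL}, and the present paper instead establishes the strictly stronger Theorems \ref{Mine 1} and \ref{Mine 2} (5-COLORING is NP-complete already for $P_6$-free graphs, and 4-COLORING for $P_7$-free graphs), which subsume the cited result. So the only meaningful comparison is with the reduction framework of Section 2.

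Measured against that, your proposal has a genuine gap: it is a plan for a proof rather than a proof. Everything you yourself flag as ``the delicate part'' --- designing clause gadgets that simultaneously (a) encode the clause constraint through the backbone and (b) admit no induced $P_6$ (respectively $P_7$) even when a path enters from one gadget, crosses the backbone, and continues into another --- is precisely the entire content of the theorem, and you never exhibit a single gadget or verify a single case of the path analysis. The paper's framework shows what is actually required: the clause gadget is abstracted as a \emph{nice $k$-critical graph} $H$, i.e.\ a $k$-critical graph with three independent vertices $c_1,c_2,c_3$ satisfying $\omega(H-\{c_1,c_2,c_3\})=\omega(H)=k-1$. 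Criticality yields the ``satisfiable iff $(k+1)$-colorable'' equivalence (Lemma \ref{iff}), and making the $U$-type vertices complete to all $X$- and $D$-type vertices confines every long induced path either to one clause component or to one variable block (Lemma \ref{P_t-free}), so $P_t$-freeness of the whole construction reduces to $P_t$-freeness of $H$. The substantive step is then producing a concrete $P_6$-free nice $4$-critical graph ($H_1$) and a $P_7$-free nice $3$-critical graph ($C_7$). Your NAE-3-SAT symmetry idea and the clique-plus-dominating-vertex lifting from $5^*$- to $6$-COLORING are both plausible in outline (a dominating vertex lies on no induced $P_4$ through itself, so it cannot create long induced paths), but until the gadgets are written down and the induced-path case analysis is carried out, the argument does not establish either NP-hardness claim.
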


In this paper we strengthen these NP-completeness results.
We prove that 5-COLORING is NP-complete for $P_6$-free graphs
and that 4-COLORING is NP-complete for $P_7$-free graphs.
We shall develop a novel general framework of reduction
and prove both results simultaneously in Section 2.
This leaves the $k$-COLORING problem for $P_t$-free graphs
unsolved only for $k=4$ and $t=6$, except for 3-COLORING.
(The complexity status of 3-COLORING $P_t$-free graphs for $t\ge 7$ is open.
It is even unknown whether there exists a fixed integer $t\ge 7$
such that 3-COLORING $P_t$-free graphs is NP-complete.)
We will focus on the case $k=4$ and $t=6$.
In Section 3, we shall explain why
the framework established in Section 2 is not sufficient to prove the NP-completeness of 4-COLORING
for $P_6$-free graphs. However, we were able to develop a polynomial time algorithm for
4-COLORING $(P_6, P)$-free graphs.
These two results suggest that 4-COLORING might be polynomially solvable for $P_6$-free graphs.
Finally, we give some related remarks in Section 4.

\section{The NP-completeness Results}


In this section, we shall prove the following main results.
\begin{theorem}\label{Mine 1}
5-COLORING is NP-complete for $P_6$-free graphs.
\end{theorem}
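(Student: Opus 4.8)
Membership in NP is immediate, since a purported $5$-coloring can be verified in polynomial time; the entire difficulty lies in establishing NP-hardness. My plan is to reduce from $5^*$-COLORING restricted to $P_6$-free graphs, which is NP-complete by Theorem~\ref{5-COL}. Thus I start from a $P_6$-free graph $G$ together with a partial coloring $\phi_W$ of a set $W\subseteq V(G)$ with colors in $\{1,\dots,5\}$, and I want to build, in polynomial time, a single graph $G'$ that is again $P_6$-free and that is $5$-colorable if and only if $\phi_W$ extends to a $5$-coloring of $G$. The whole reduction therefore amounts to \emph{simulating the pre-coloring by ordinary coloring constraints} without ever creating an induced $P_6$.

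The textbook device is a palette: add a clique $A=\{a_1,\dots,a_5\}$, which in any $5$-coloring receives all five colors, so that after renaming colors we may assume $a_i$ gets color $i$; then, for each pre-colored vertex $v$ with $\phi_W(v)=c$, join $v$ to every $a_j$ with $j\ne c$, forcing $v$ to color $c$. The equivalence of the two instances is then routine. The trouble is that this naive gadget is \emph{not} $P_6$-free. A palette vertex $a_i$ is adjacent precisely to the pre-colored vertices whose color differs from $i$, and these may lie far apart in $G$, so $a_i$ acts as a shortcut: for instance, if $G$ contains an induced $P_5$ whose first vertex is pre-colored with a color $\ne i$ and whose remaining four vertices are non-adjacent to $a_i$ (e.g. not pre-colored), then prepending $a_i$ yields an induced $P_6$ in $G'$. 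Since a $P_6$-free graph generally abounds in induced copies of $P_5$, such configurations are unavoidable if the palette is attached blindly.

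The key idea I would pursue is therefore to fold the color-forcing into the construction so that the forcing vertices \emph{dominate} the part of the graph they could otherwise shortcut. Concretely, rather than importing $G$ as a black box, I would replay the hardness reduction that underlies Theorem~\ref{5-COL} and build the color palette into its variable and clause gadgets, arranging each forcing clique to be adjacent to essentially all gadget vertices that lie within induced-path distance of it; the source instance should also be normalized (for example, so that the pre-colored set is structured and the non-pre-colored vertices do not form long induced paths reaching the palette). The proof then splits into the two standard halves: (i) a correctness argument that $G'$ is $5$-colorable exactly when $\phi_W$ extends, which should follow gadget-by-gadget, and (ii) a proof that $G'$ is $P_6$-free, carried out by a case analysis according to how an induced path meets the palette clique, the gadgets, and the original vertices, using that an induced path meets any clique in at most two consecutive vertices. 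I expect part (ii) to be the main obstacle: the entire novelty is to engineer the forcing gadgets densely enough to kill every potential shortcut while still leaving the non-pre-colored vertices genuinely free, and verifying that no induced $P_6$ survives this balancing act is exactly where the careful combinatorics will be required.
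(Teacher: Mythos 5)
There is a genuine gap: your proposal stops exactly where the proof has to begin. You correctly diagnose why the naive palette reduction from $5^*$-COLORING fails --- a palette vertex $a_i$ is adjacent to scattered pre-colored vertices and acts as a shortcut that creates induced copies of $P_6$ --- but the ``key idea'' you then offer is only a specification of what a working gadget would have to achieve (``arrange each forcing clique to be adjacent to essentially all gadget vertices that lie within induced-path distance of it''), not a construction. You give no concrete gadget, no correctness argument, and no $P_6$-freeness verification, and you explicitly concede that part (ii) --- showing no induced $P_6$ survives --- is ``exactly where the careful combinatorics will be required.'' That combinatorics is the entire content of the theorem; the passage from the pre-coloring version to the ordinary coloring version while keeping the graph $P_6$-free is precisely the obstacle that kept earlier work (Theorem~\ref{5-COL}) stuck at $5^*$-COLORING for $P_6$-free graphs and $6$-COLORING for $P_7$-free graphs. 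Asserting that it can be overcome is not a proof that it can.

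For contrast, the paper does not start from $5^*$-COLORING at all. It reduces directly from 3-SAT via the graph $G_{H,I}$ built from a \emph{nice $4$-critical} $P_6$-free gadget $H_1$: each clause gets a copy of $H_1$ with three specified independent vertices ($C$-type) attached to literal and auxiliary vertices, and every remaining ($U$-type) vertex of the clause gadget is made adjacent to \emph{all} literal and auxiliary vertices. The color forcing is achieved not by a global palette clique but by the criticality of $H_1$ together with the clique $R_1\cup T_1$ of size $k+1$ (Lemma~\ref{iff}), and the dense $U$-to-$(X\cup D)$ adjacency is what makes the case analysis in Lemma~\ref{P_t-free} close: any induced path through a $U$-type vertex is short, and paths avoiding $U$-type vertices live in small pieces $G_i$ that cannot contain a $P_6$. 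If you want to salvage your plan, you would have to exhibit an analogous density-plus-criticality mechanism explicitly; as written, the proposal is an outline of intent rather than an argument.
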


\begin{theorem}\label{Mine 2}
4-COLORING is NP-complete for $P_7$-free graphs.
\end{theorem}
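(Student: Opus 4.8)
The plan is to reduce from the pre-coloring extension problem $4^*$-COLORING restricted to $P_7$-free graphs, which is NP-complete by Theorem \ref{4-COL}; the same reduction applied to $5^*$-COLORING of $P_6$-free graphs (Theorem \ref{5-COL}) will yield Theorem \ref{Mine 1}, so I would set up the construction uniformly in the number of colors $k$ and the path length $t$ and instantiate it at $(k,t)=(4,7)$ and $(k,t)=(5,6)$. Membership in NP is immediate. Given an instance $(G,\phi_W)$ with $G$ being $P_t$-free and $\phi_W$ a precoloring of $W\subseteq V(G)$ (which we may assume proper, so that each color class $W_i$ is an independent set of $G$), I would build $G'$ from $G$ by adjoining a clique $A=\{a_1,\dots,a_k\}$ (a ``palette'') and, for every precolored vertex $w\in W_i$, joining $w$ to all of $A\setminus\{a_i\}$.

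The correctness of this forcing is straightforward. In any $k$-coloring of $G'$ the clique $A$ uses all $k$ colors, so we may identify color $i$ with the color assigned to $a_i$; then each $w\in W_i$, being complete to $A\setminus\{a_i\}$, is forced to receive the color of $a_i$. Hence a $k$-coloring of $G'$ restricts to a coloring of $G$ agreeing with $\phi_W$ up to a permutation of the color names, and conversely any extension of $\phi_W$ can be combined with the proper coloring $a_i\mapsto i$ of $A$. Since extendability of a precoloring is invariant under permuting the colors, $G'$ is $k$-colorable if and only if $\phi_W$ extends to a $k$-coloring of $G$.

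The main obstacle --- and the real content of the reduction --- is to ensure that $G'$ remains $P_t$-free. Since $A$ is a clique, an induced path of $G'$ uses at most two of its vertices, so what must be understood is how a path of $G$ can be prolonged into the palette. Each $a_i$ is, within $V(G)$, adjacent exactly to $W\setminus W_i$, and is non-adjacent to $W_i$ as well as to every vertex of the ``free'' part $V(G)\setminus W$. The troublesome configuration is therefore a long induced path of $G$ all of whose vertices lie in $(V(G)\setminus W)\cup W_i$ except for one endpoint in $W\setminus W_i$: such a path can be lengthened by appending $a_i$, and if $G$ already contains an induced $P_{t-1}$ of this shape the result is an induced $P_t$. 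A bare palette gadget cannot preclude this, which is precisely why a black-box reduction from $k^*$-COLORING fails and a genuinely new framework is needed. I expect the bulk of the work to lie in engineering the source instances so that $W$ is \emph{spread through} $G$ --- guaranteeing that every induced path of $G$ long enough to matter already contains two vertices of $W\setminus W_i$ for each $i$, so that no palette vertex can extend it --- and then verifying, by a short case analysis on the at most two palette vertices lying on a putative long path, that no induced $P_t$ survives. Because the construction is uniform in $k$ and $t$, the same analysis at $(k,t)=(4,7)$ establishes Theorem \ref{Mine 2} and at $(k,t)=(5,6)$ establishes Theorem \ref{Mine 1}.
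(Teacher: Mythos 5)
There is a genuine gap, and you have in fact located it yourself: the entire difficulty of the theorem is the step you defer to ``engineering the source instances so that $W$ is spread through $G$,'' and that step is never carried out. The palette-clique forcing and its correctness (up to permutation of colors) are routine and fine, but attaching a clique $A$ to the hard instances of $4^*$-COLORING from Theorem \ref{4-COL} gives you no control over induced paths: a palette vertex $a_i$ can extend any induced $P_{t-1}$ of $G$ that meets $W\setminus W_i$ only at an endpoint, and nothing in the known $P_7$-free $4^*$-COLORING instances rules such paths out. Asserting that the sources can be re-engineered so that every relevant induced path already contains two vertices of $W\setminus W_i$ for each $i$ is not a proof; it is a restatement of the open problem, since one would then have to re-prove NP-hardness and $P_7$-freeness for the modified instances. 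A black-box reduction from pre-coloring extension therefore does not go through as written.

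The paper takes a different and self-contained route: a direct reduction from 3-SAT. It builds a graph $G_{H,I}$ from a \emph{nice $(k-1)$-critical} gadget $H$ (for $k=4$ one takes $H=C_7$, a $P_7$-free nice $3$-critical graph), with variable edges $x_i\bar{x_i}$, auxiliary vertices $d_i$, and one copy of $H$ per clause whose three specified independent vertices attach to $d_i$ and the corresponding literal vertices. The forcing of ``palette'' colors is achieved by making all remaining ($U$-type) vertices of the clause gadgets complete to all literal and $d$-vertices, and it is exactly this completeness that lets one prove $P_t$-freeness (Claim A in Lemma \ref{P_t-free}: any induced path through a $U$-type vertex is short). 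In other words, the paper designs the construction so that the forcing structure and the path-killing structure are the same edges, which is precisely what your proposal would need but does not supply.
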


Instead of giving two independent proofs for Theorems \ref{Mine 1} and \ref{Mine 2},
we provide a unified framework.
The {\em chromatic number} of a graph $G$,
denoted by $\chi(G)$, is the minimum positive integer $k$ such that $G$ is $k$-colorable.
The {\em clique number} of a graph $G$,
denoted by $\omega(G)$, is the maximum size of a clique in $G$.
A graph $G$ is called {\em k-critical} if $\chi(G)=k$ and $\chi(G-v)<k$
for any vertex $v$ in $G$. We call a $k$-critical graph {\em nice}
if $G$ contains three independent vertices $\{c_1,c_2,c_3\}$ such that
$\omega(G-\{c_1,c_2,c_3\})=\omega(G)=k-1$.
For instance, any odd cycle of length at least 7
with any its three independent vertices is a nice 3-critical graph.

Let $I$ be any 3-SAT instance with variables $X=\{x_1,x_2,\ldots,x_n\}$
and clauses $\mathcal{C}=\{C_1,C_2,\ldots,C_m\}$,
and let $H$ be a nice $k$-critical graph with three specified independent vertices $\{c_1,c_2,c_3\}$.
We construct the graph $G_{H,I}$ as follows.

$\bullet$ Introduce for each variable $x_i$ a {\em variable component} $T_i$ which is isomorphic to
$K_2$, labeled by $x_i\bar{x_i}$. Call these vertices {\em $X$-type}.

$\bullet$ Introduce for each variable $x_i$ a vertex $d_i$. Call these vertices {\em $D$-type}.

$\bullet$ Introduce for each clause $C_j=y_{i_1}\vee y_{i_2}\vee y_{i_3}$ a {\em clause component}
$H_j$ which is isomorphic to $H$, where $y_{i_t}$ is either $x_{i_t}$ or $\bar{x_{i_t}}$.
Denoted three specified independent vertices in $H_j$
by $c_{i_tj}$ for $t=1,2,3$. Call $c_{i_tj}$ {\em $C$-type} and all remaining vertices {\em $U$-type}.

For any $C$-type vertex $c_{ij}$ we call $x_i$ or $\bar{x_i}$ its {\em corresponding literal vertex},
depending on whether $x_i\in C_j$ or $\bar{x_i}\in C_j$.

$\bullet$ Make each $U$-type vertex adjacent to each $D$-type and $X$-type vertices.

$\bullet$ Make each $C$-type vertex $c_{ij}$ adjacent to $d_i$ and its corresponding literal vertex.

\begin{lemma}\label{iff}
Let $H$ be a nice $k$-critical graph.
Suppose $G_{H,I}$ is the graph constructed from $H$ and a 3-SAT instance $I$.
Then $I$ is satisfiable if and only if $G_{H,I}$ is $(k+1)$-colorable.
\end{lemma}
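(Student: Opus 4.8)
The plan is to prove both implications by exploiting the interplay between the global vertices ($X$-type and $D$-type) and the clause gadgets, working throughout with the colour set $\{1,\dots,k+1\}$. The engine of the whole argument is a structural fact that I would establish first: in every $(k+1)$-colouring $\phi$ of $G_{H,I}$, the $X$-type and $D$-type vertices together use exactly two colours. Indeed, since each $U$-type vertex is complete to the set of all $X$-type and $D$-type vertices, the colours appearing on the $U$-type vertices of any clause component $H_j$ are disjoint from the colours appearing on $X\cup D$. Niceness of $H$ supplies a clique of size $\omega(H)=k-1$ inside $H-\{c_1,c_2,c_3\}$, that is, among the $U$-type vertices of $H_j$; this clique alone consumes $k-1$ distinct colours, leaving at most two for $X\cup D$. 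On the other hand each variable gadget $T_i\cong K_2$ forces at least two colours on $X$. Hence $X\cup D$ uses exactly two colours, say $\alpha$ and $\beta$, and the $U$-type part of every clause component is coloured with precisely the remaining $k-1$ colours.

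With this in hand I would isolate the ``escape'' mechanism that ties colourings to truth values. Because $c_{ij}$ is adjacent only to $d_i$, to its corresponding literal vertex $y_i$, and to $U$-type vertices of its own component, and because that $U$-type part already exhausts the $k-1$ colours other than $\alpha,\beta$, criticality of $H$ forbids colouring all three $C$-type vertices of $H_j$ inside those $k-1$ colours: otherwise $H_j\cong H$ would be $(k-1)$-colourable, contradicting $\chi(H)=k$. Thus in every colouring at least one $C$-type vertex $c_{ij}$ receives a colour in $\{\alpha,\beta\}$; since it is adjacent to $d_i$ and to $y_i$, both of which also lie in $\{\alpha,\beta\}$, this forces $\phi(d_i)=\phi(y_i)$. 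Conversely, whenever $\phi(d_i)=\phi(y_i)$ the vertex $c_{ij}$ may take the other colour of $\{\alpha,\beta\}$, and then $\chi(H-c_{ij})=k-1$ lets me complete a proper colouring of $H_j$ with the $U$-type part and the remaining two $C$-type vertices all inside the $k-1$ colours disjoint from $\{\alpha,\beta\}$.

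For the backward direction I would then declare a literal vertex $y_i$ to be \emph{true} exactly when $\phi(y_i)=\phi(d_i)$; since $x_i\bar{x_i}$ is an edge and both endpoints lie in $\{\alpha,\beta\}$, this yields a consistent truth assignment. The mechanism above shows that each clause component forces $\phi(d_i)=\phi(y_i)$ for at least one of its literals, so every clause contains a true literal and $I$ is satisfied. For the forward direction, given a satisfying assignment, I colour $X\cup D$ with the two reserved colours so that every $d_i$ gets one fixed colour and each literal vertex shares $d_i$'s colour iff the literal is true (consistency across $x_i,\bar{x_i}$ is immediate). In each clause a true literal then produces a $C$-type vertex whose two external neighbours share $d_i$'s colour, which I let escape to the second reserved colour; criticality colours the rest of that component inside the other $k-1$ colours, and a routine check of the four edge types confirms the colouring is proper.

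The main obstacle is the structural lemma of the first paragraph: everything downstream depends on pinning $X\cup D$ to exactly two colours, and this relies essentially on niceness, namely that $H$ carries a $(k-1)$-clique avoiding the three distinguished vertices. Once the two-colour fact and the escape equivalence are in place, both directions are short; the remaining care is only bookkeeping — verifying that the non-escaping $C$-type vertices never conflict with their external neighbours (they sit inside the $k-1$ colours disjoint from $\{\alpha,\beta\}$), and handling degenerate instances with no variables or clauses trivially.
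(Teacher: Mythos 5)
Your proposal is correct and follows essentially the same route as the paper: your ``two-colour'' structural fact is exactly the paper's properties (P1)--(P3), obtained there from the $(k+1)$-clique formed by a $(k-1)$-clique of $U$-type vertices together with $T_1$, and your escape mechanism is the paper's criticality argument that some $c_{i_tj}$ must receive a colour from $\{k,k+1\}$, forcing $\phi(d_{i_t})=\phi(y_{i_t})$. The only differences are cosmetic (you leave the two colours as $\alpha,\beta$ rather than normalising them to $k$ and $k+1$).
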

\begin{proof}
We first assume that $I$ is satisfiable and let $\sigma$ be a truth assignment
satisfying each clause $C_j$. Then we define a mapping $\phi:V(G)\rightarrow \{1,2,\ldots,k+1\}$ as follows.

$\bullet$ Let $\phi(d_i):=k+1$ for each $i$.

$\bullet$ If $\sigma(x_i)$ is TRUE, then $\phi(x_i):=k+1$ and  $\phi(\bar{x_i}):=k$.
Otherwise, let $\phi(x_i)=:k$ and  $\phi(\bar{x_i})=:k+1$.

$\bullet$ Let $C_j=y_{i_1}\vee y_{i_2}\vee y_{i_3}$ be any clause in $I$.
Since $\sigma$ satisfies $C_j$, at least one literal in $C_j$, say $y_{i_t}$ ($t\in \{1,2,3\}$), is TRUE.
Then the corresponding literal vertex of $c_{i_tj}$ receives the same color as $d_{i_t}$.
Therefore, we are allowed to color $c_{i_tj}$ with color $k$.
In other words, we let $\phi(c_{i_tj}):=k$.

$\bullet$ Since $H_j=H$ is $k$-critical, $H_j-c_{i_tj}$ has a $(k-1)$-coloring
$\phi_j:V(H_j-c_{i_tj})\rightarrow \{1,2,\ldots,k-1\}$. Let $\phi=:\phi_j$ on $H_j-c_{i_tj}$.

\noindent It is easy to check that $\phi$ is indeed a $(k+1)$-coloring of $G_{H,I}$.

Conversely, suppose $\phi$ is a $(k+1)$-coloring of $G_{H,I}$.
Since $H_1=H$ is a nice $k$-critical graph, the largest clique of $U$-type vertices in $H_1$ has size $k-1$.
Let $R_1$ be such a clique. Note that $\omega(G_{H,I})=k+1$ and $R=R_1\cup T_1$ is a clique of size $k+1$.
Therefore, any two vertices in $R$ receive different colors in any $(k+1)$-coloring of $G_{H,I}$. Without loss
of generality, we may assume $\{\phi(x_1),\phi(\bar{x_1})\}=\{k,k+1\}$.
Because every $U$-type vertex is adjacent to every $X$-type and $D$-type vertex,
we have the following three properties of $\phi$.

(P1) $\{\phi(x_i),\phi(\bar{x_i})\}=\{k,k+1\}$ for each $i$.

(P2) $\phi(d_i)\in \{k,k+1\}$ for each $i$.

(P3) $\phi(u)\in \{1,2,\ldots,k-1\}$ for each $U$-type vertex.

\noindent Next we construct a truth assignment $\sigma$ as follows.

$\bullet$ Set $\sigma(x_i)$ to be TRUE if $\phi(x_i)=\phi(d_i)$ and FALSE otherwise.

\noindent It follows from (P1) and (P2) that $\sigma$ is a truth assignment.
Suppose $\sigma$ does not satisfy $C_j=y_{i_1}\vee y_{i_2}\vee y_{i_3}$.
Equivalently, $\sigma(y_{i_t})$ is FALSE for each $t=1,2,3$. It follows from our definition
of $\sigma$ that the corresponding literal vertex of $c_{i_tj}$ receives
different color from the color of $d_{i_t}$ under $\phi$.
Hence, $\phi(c_{i_tj})\notin \{k,k+1\}$ for $t=1,2,3$
and this implies that $\phi$ is a $(k-1)$-coloring of $H_j=H$ by (P3).
This contradicts the fact that $\chi(H)=k$.  \qed
\end{proof}

\begin{lemma}\label{P_t-free}
Let $H$ be a nice $k$-critical graph.
Suppose $G_{H,I}$ is the graph constructed from $H$ and a 3-SAT instance $I$.
If $H$ is $P_t$-free where $t\ge 6$, then $G_{H,I}$ is $P_t$-free as well.
\end{lemma}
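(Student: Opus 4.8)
The plan is to show that any induced path $P$ in $G_{H,I}$ cannot have more than $t-1$ vertices, by analyzing how $P$ can interact with the five types of vertices ($X$-type, $D$-type, $C$-type, $U$-type). The key structural observation I would exploit first is that the $U$-type vertices of all clause components, together with all $X$-type and $D$-type vertices, are heavily interconnected: every $U$-type vertex is complete to every $X$-type and $D$-type vertex. This density means an induced path cannot linger among these vertices — it can use at most two consecutive vertices from any set that is complete to a large portion of the graph, since a third would create a chord. So I would first catalog the adjacencies: $X$-type induces a perfect matching (the $K_2$'s), $D$-type is an independent set, $U$-type vertices from distinct clause components $H_j, H_{j'}$ are non-adjacent, and the only ``private'' edges are those inside a single $H_j$ and the pendant-style edges from each $c_{ij}$ to $d_i$ and to its literal vertex.

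The heart of the argument is a \emph{locality} claim: any induced path $P$ in $G_{H,I}$ can contain $U$-type vertices from at most one clause component $H_j$. The reason is that to travel from $U$-type vertices of $H_j$ to $U$-type vertices of $H_{j'}$, the path must pass through the common neighborhood, i.e.\ through $X$-type or $D$-type vertices (or through $C$-type vertices, which have very restricted neighborhoods). Because every $U$-type vertex sees every $X$- and $D$-type vertex, an induced path can pick up at most a bounded number (I expect at most two or three) of these ``global'' vertices before being forced to stop, as any additional global vertex would be adjacent to $U$-type vertices already on the path, creating a forbidden chord. So I would argue that after fixing the (at most one) component $H_j$ that supplies the $U$-type vertices, the rest of $P$ consists of a bounded number of $X$-, $D$-, and $C$-type vertices attached near the two ends of the portion of $P$ lying inside $H_j$.

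Once locality is established, the remaining work is a bounded case analysis. The portion of $P$ inside $H_j$ is itself an induced path in $H_j \cong H$, hence has at most $t-1$ vertices since $H$ is $P_t$-free. I then need to show that augmenting this with the finitely many global vertices ($X$-, $D$-, $C$-type) at the two ends still cannot reach length $t$. This is where I would be careful: I must bound exactly how many extra vertices can hang off each end. The $C$-type vertices $c_{ij}$ are the delicate ones, since they bridge a clause component to the literal and $d_i$ vertices; but each $c_{ij}$ has only three neighbors ($d_i$, its literal vertex, and its $U$-type neighbors inside $H_j$), so its contribution to an induced path is tightly constrained. I would verify that the ``overhead'' from global vertices is small enough — using the hypothesis $t \ge 6$ to absorb the constant — so that the total length stays below $t$.

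\textbf{The main obstacle} I anticipate is making the locality claim fully rigorous: precisely bounding how many $X$-, $D$-, and $C$-type vertices an induced path may contain and ruling out clever routings that weave between a clause component, the literal gadget, and the $D$-type vertices. The completeness relations force most such routings to create chords, but the $C$-type vertices create narrow ``escape corridors'' (via the $c_{ij}$--literal and $c_{ij}$--$d_i$ edges) that must each be examined. I expect the proof to reduce to checking that every such corridor, combined with the at-most-$(t-1)$-vertex induced path inside a single $H$, still yields fewer than $t$ vertices — exactly the point where the assumption $t \ge 6$ (rather than a smaller value) is needed to guarantee $H$ is large enough that no short induced path through the gadget can be extended into a $P_t$.
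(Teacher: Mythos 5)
Your catalogue of adjacencies is accurate and you have correctly identified the driving structural fact (every $U$-type vertex is complete to $X\cup D$), but the way you propose to combine the pieces contains an arithmetic gap that cannot be papered over. You plan to bound $|P|$ as (length of the portion of $P$ inside one clause component $H_j$, which is at most $t-1$ by $P_t$-freeness of $H$) plus (a small ``overhead'' of global vertices), and to ``use $t\ge 6$ to absorb the constant.'' This cannot work: if the portion inside $H_j$ really can have $t-1$ vertices, then adding even one global vertex already yields $t$ vertices. The correct statement is not additive but a dichotomy, and it is exactly what the paper's Claim~A establishes: if $P$ contains a $U$-type vertex $u$ of $H_j$ and both neighbours of $u$ on $P$ lie in $H_j$, then $P\subseteq H_j$ entirely (any first vertex of $P$ outside $H_j$ would be $X$- or $D$-type, hence adjacent to $u$, creating a chord), contradicting $P_t$-freeness of $H$; whereas if $P$ exits $H_j$ through an $X$- or $D$-type vertex adjacent to $u$, the completeness of $U$ to $X\cup D$ collapses the \emph{entire} path to at most $5$ vertices. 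Since $t\ge 6$, the conclusion is that $P$ contains no $U$-type vertex at all --- a much stronger statement than your locality claim, and the one actually needed.

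You also leave untreated the case in which $P$ contains no $U$-type vertex, which your sketch implicitly assumes away by ``fixing the component $H_j$ that supplies the $U$-type vertices.'' Once $U$-type vertices are excluded, $G_{H,I}-U$ decomposes into the variable gadgets $G_i=G[\{x_i,\bar{x_i},d_i\}\cup C_i\cup\bar{C_i}]$, and one must still rule out an induced $P_6$ there. The paper does this by noting that $C_i\cup\bar{C_i}$ is independent, so a $P_6$ inside $G_i$ would need exactly three $C$-type vertices together with all of $x_i,\bar{x_i},d_i$; but every $C$-type vertex of $G_i$ is adjacent to $d_i$, so $d_i$ would have three neighbours on the path. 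Your proposal never invokes this shared-neighbour property of $d_i$, so even after repairing the dichotomy above, this second half of the argument would still be missing.
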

\begin{proof}
Suppose $P=P_t$ is an induced path with $t\ge 6$ in $G_{H,I}$.
We first prove the following claim.

\noindent {\bf Claim A.} {\em $P$ contains no $U$-type vertex.}

\noindent {\em Proof of Claim A.} Suppose that $u$ is a $U$-type vertex on $P$ that
lies in some clause component $H_j$. For any vertex $x$ on $P$ we denote by $x^-$ and $x^+$ the left
and right neighbor of $x$ on $P$, respectively. Let us first consider the case when $u$
is the left endvertex of $P$. If $u^+$ belongs to $H_j$, then $P\subseteq H_j$,
since $u$ is adjacent to all $X$-type and $D$-type vertices and $P$ is induced.
This contradicts the fact that $H$ is $P_t$-free.  Hence, $u^+$ is either $X$-type or $D$-type.
Note that $u^{++}$ must be $C$-type or $U$-type.
In the former case we conclude that $u^{+++}$ is $U$-type since $C$-type vertices are independent.
Hence, $|P|\le 3$ and this is a contradiction. In the latter case we have $|P|\le 4$ for the same
reason. Note that $|P|=4$ only if $P$ follows the pattern $U(X\cup D)UC$, namely the first
vertex of $P$ is $U$-type, the second vertex of $P$ is $X$-type or $D$-type, and so on.
Next we consider the case that $u$ has two neighbors on $P$.

{\bf Case 1.} Both $u^-$ and $u^+$ belong to $H_j$. In this case $P\subseteq H_j$ and this contradicts
the fact that $H=H_j$ is $P_t$-free.

{\bf Case 2.} $u^-\in H_j$ but $u^+\notin H_j$. Then $u^+$ is either $X$-type or $D$-type.
Since each $U$-type vertex is adjacent to each $X$-type and $D$-type vertex,
$u^-$ is a $C$-type vertex and hence it is an endvertex of $P$.
Now $|P|\le 2+4-1=5$.

{\bf Case 3.} Neither $u^-$ nor $u^+$ belongs to $H_j$. Now both $u^-$ and $u^+$ are
$X$-type or $D$-type. Since each $U$-type vertex is adjacent to each $X$-type and $D$-type
vertex, $P\cap U=\{u\}$ and $P\cap (X\cup D)=\{u^-,u^+\}$. Hence, $|P|\le 5$.
($|P|=5$ only if $P$ follows the pattern $C(X\cup D)U(X\cup D)C$).
\qed

Let $C_i$ (respectively $\bar{C_i}$) be the set of $C$-type vertices that connect to $x_i$ (respectively $\bar{x_i}$).
Let $G_{i}=G[\{T_i\cup\{d_i\}\cup C_i\cup \bar{C_i}\}]$.
Note that $G-U$ is disjoint union of $G_{i}$, $i=1,2,\ldots,n$.
By Claim A, $P\subseteq G_{i}$ for some $i$. Let $P'$ be a sub-path of $P$
of order 6 . Since $C_i \cup \bar{C_i}$ is independent, $|P'\cap (C_i \cup \bar{C_i})|\le 3$.
Hence, $|P'\cap (C_i \cup \bar{C_i})|= 3$ and thus $\{d_i,x_i,\bar{x_i}\}\subseteq P'$.
This contradicts the fact that $P'$ is induced
since $d_i$ has three $C$-type neighbors on $P'$. \qed
\end{proof}

Due to Lemmas \ref{iff} and \ref{P_t-free}, the following theorem follows.
\begin{theorem}\label{main th}
Let $t\ge 6$ be an fixed integer. Then $k$-COLORING is NP-complete for $P_t$-free graphs
whenever there exists a $P_t$-free nice $(k-1)$-critical graph. \qed
\end{theorem}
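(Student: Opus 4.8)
The plan is to establish NP-completeness of $k$-COLORING on $P_t$-free graphs by the standard two-part argument: membership in NP, together with a polynomial-time many-one reduction from a known NP-complete problem, which here will be 3-SAT. Membership in NP is immediate and independent of the graph class: given a candidate map $\phi:V\to\{1,\ldots,k\}$, one checks in linear time that adjacent vertices receive distinct colors, so $k$-COLORING lies in NP for every input graph, and in particular for $P_t$-free graphs. The substance is therefore entirely on the hardness side.

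For hardness I would fix, once and for all, a $P_t$-free nice $(k-1)$-critical graph $H$ together with its three specified independent vertices $\{c_1,c_2,c_3\}$; such an $H$ exists precisely by the hypothesis of the theorem, and, crucially, it has constant size because $t$ and $k$ are fixed. Given an arbitrary 3-SAT instance $I$ with $n$ variables and $m$ clauses, the reduction outputs the graph $G_{H,I}$ constructed above. The correctness of this reduction is exactly what the two preceding lemmas supply, after a single bookkeeping adjustment: I would apply Lemma \ref{iff} and Lemma \ref{P_t-free} with the critical parameter played by $k-1$ in place of $k$. Lemma \ref{P_t-free} (using the hypothesis $t\ge 6$) then guarantees that $G_{H,I}$ is $P_t$-free, and Lemma \ref{iff} gives that $I$ is satisfiable if and only if $G_{H,I}$ is $k$-colorable.

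The one point the lemmas do not directly address, and hence the part I would spell out, is that the map $I\mapsto G_{H,I}$ is computable in polynomial time. This is routine: $G_{H,I}$ has one $K_2$ variable component and one $D$-type vertex per variable, one copy of the fixed graph $H$ per clause, and only a bounded number of edge types, so $|V(G_{H,I})|=O(n+m)\cdot|V(H)|=O(n+m)$ and all adjacencies can be listed in time polynomial in $n+m$, since $|V(H)|$ is a constant. Combined with NP-membership and the equivalence from Lemma \ref{iff}, this yields NP-completeness.

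I do not anticipate a genuine obstacle here: once the two lemmas are in hand, the theorem is essentially a packaging statement. The only care required is the index shift from $k$-critical (as in the lemmas) to $(k-1)$-critical (as in the theorem), together with the trivial but necessary remark that fixing $H$ of constant size keeps the construction polynomial; the real difficulty of the whole argument lives inside Lemmas \ref{iff} and \ref{P_t-free}, not in their assembly.
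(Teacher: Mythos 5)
Your proposal is correct and follows exactly the paper's route: the paper derives Theorem \ref{main th} immediately from Lemmas \ref{iff} and \ref{P_t-free} (with the same index shift from $(k-1)$-critical to $k$-colorability), leaving the NP-membership and polynomial-size observations implicit. Your explicit treatment of those routine points is sound and adds nothing that conflicts with the paper.
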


\noindent {\em Proof of Theorems \ref{Mine 1} and \ref{Mine 2}.}
Let $H_1$ be the graph shown in Figure \ref{H1} and let $H_2=C_7$ be the 7-cycle.
It is easy to check that $H_1$ is a $P_6$-free nice 4-critical graph
and that $H_2$ is a $P_7$-free nice 3-critical graph.
Applying Theorem \ref{main th} with $H=H_i$ ($i=1,2$) will complete our proof.
\qed

\begin{figure}[htbp]
\centering
\includegraphics[width=0.4\textwidth]{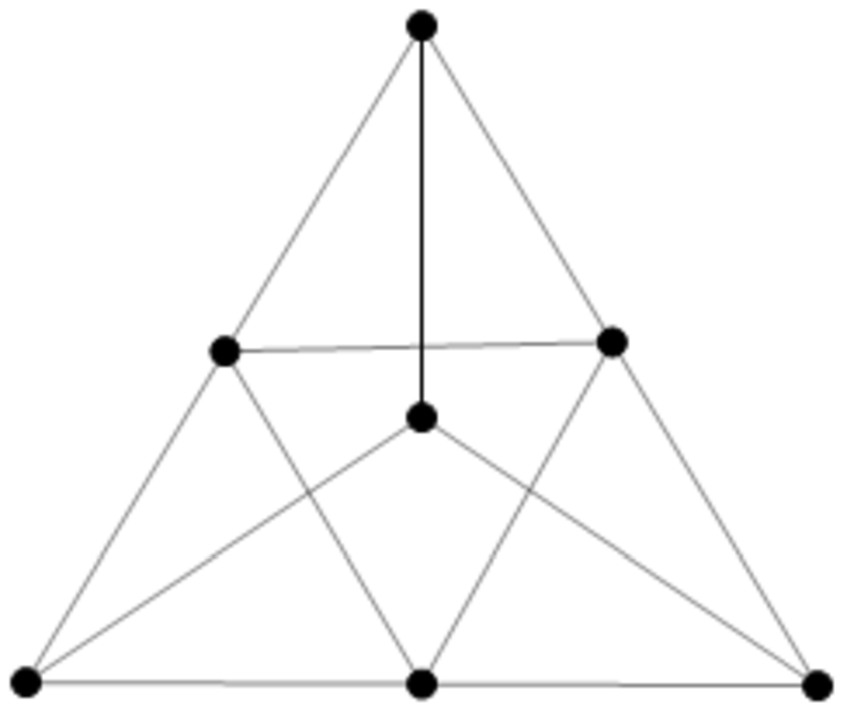}
\caption{$H_1$.}
\label{H1}
\end{figure}

\section{The Polynomial Result}

Having proved Theorems \ref{Mine 1} and \ref{Mine 2} the next question now is
whether 4-COLORING is NP-complete for $P_6$-free graphs. We show that the framework
established in Section 2 is not sufficient to prove the NP-completeness.
\begin{ob}
There is no $P_6$-free nice 3-critical graph.
\end{ob}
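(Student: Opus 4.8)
The plan is to reduce the Observation to the classification of all $3$-critical graphs and then inspect the short list that survives the constraints of niceness and $P_6$-freeness. The structural fact I would establish first is the classical one: \emph{every $3$-critical graph is an odd cycle}. Granting this, the Observation reduces to a finite check. Niceness of a $3$-critical graph $G$ forces $\omega(G)=k-1=2$, so $G$ is triangle-free; among odd cycles this already discards $C_3$. Niceness also requires three pairwise non-adjacent vertices, which $C_5$ lacks, since its largest independent set has size two. Finally, every odd cycle $C_n$ with $n\ge 7$ contains an induced $P_6$ on six consecutive vertices and is therefore not $P_6$-free. As no odd cycle is simultaneously nice and $P_6$-free, no $P_6$-free nice $3$-critical graph exists.

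The crux is the classification step, and this is where I expect the real work. I would use $3$-criticality in the form $\chi(G)=3$ together with the fact that $G-v$ is bipartite for every vertex $v$. Since $\chi(G)=3$, the graph $G$ contains an odd cycle $C$; if some vertex $v$ were not on $C$, then $C$ would survive in $G-v$, making $G-v$ non-bipartite, a contradiction. Hence every vertex lies on $C$, so $C$ is a Hamiltonian cycle and $n=|V(G)|$ is odd. It then remains to rule out chords: a chord $v_iv_j$ splits $C$ into two arcs, and the two cycles obtained from the chord together with each arc have lengths summing to $n+2$, which is odd; therefore exactly one of them is an odd cycle. This odd sub-cycle omits every interior vertex of the complementary arc, so deleting such a vertex leaves $G$ non-bipartite, again contradicting $3$-criticality. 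Thus $G$ has no chord and $G=C$ is an odd cycle.

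The main obstacle is keeping the chord argument airtight, in particular verifying that a genuine chord always leaves an interior vertex on the arc complementary to the odd sub-cycle, so that there is indeed a vertex available to delete. This follows once one observes that the endpoints of a chord are non-consecutive on $C$, whence both arcs carry interior vertices; I would make this explicit. The remaining ingredients---the parity bookkeeping for the two sub-cycles and the final case analysis over $C_3$, $C_5$, and $C_n$ with $n\ge 7$---are routine.
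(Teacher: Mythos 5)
Your proof is correct and follows the same route as the paper: reduce to the classical fact that the only $3$-critical graphs are odd cycles, then observe that $P_6$-freeness leaves only $C_3$ and $C_5$, both of which fail the niceness condition (no three independent vertices, and $\omega=2$ rules out $C_3$). The only difference is that you supply a self-contained proof of the classification of $3$-critical graphs, which the paper simply cites as known; your chord/parity argument for that step is sound.
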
 
\noindent {\bf Proof.} 
Suppose $H$ is $P_6$-free nice 3-critical graphs with $\{c_1,c_2,c_3\}$ being independent.
Since the only 3-critical graphs are odd cycles and $H$ is $P_6$-free, $H$ must be $C_5$.
But this contradicts the fact that $C_5$ contains at most two independent vertices. \qed

This negative result suggests that
4-COLORING $P_6$-free graphs might be solved in polynomial time.
But it seems difficult to prove this since
the usual techniques for 3-COLORING (see, e.g., \cite{Fomin,Broersma,Schiermeyer,Woe}) do not apply.
It turns out that the problem becomes easier if we forbid one more induced subgraph.
For example, if we consider $P_6$-free graphs that are also triangle-free, then the 4-COLORING problem
becomes trivial since every triangle-free $P_6$-free graph is 4-colorable (see, e.g., \cite{Randerath}).
Due to a result of Olariu \cite{Olariu} every component of a paw-free graph is either triangle-free
or a complete multipartite graph. Therefore, the result on 4-COLORING $(P_6,K_3)$-free graphs extends
to $(P_6,paw)$-free graphs. Golovach et al. \cite{Short Cycle} showed that $k$-COLORING is polynomial time
solvable for $(K_{r,s},P_t)$-free graphs for every fixed positive integer $k,r,s,t$.
In particular, 4-COLORING is polynomial time solvable for $(P_6,C_4)$-free graphs.

Next we shall prove 4-COLORING is polynomially solvable for $(P_6,P)$-free graphs
where $P$ is the graph obtained from $C_4$ by adding a new vertex and making it adjacent to exactly
one vertex on the $C_4$.
The coloring algorithm given below makes use of the following well-known lemma.
\begin{lemma}[\cite{Edwards}]\label{2-SAT}
Let $G$ be a graph in which every vertex has a list of colors of size at most 2.
Then checking whether $G$ has a coloring respecting these lists is solvable in polynomial time.
\end{lemma}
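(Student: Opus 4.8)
The plan is to reduce the list-coloring feasibility question to 2-SATISFIABILITY, which is solvable in polynomial time by the standard implication-graph (strongly connected components) algorithm. First I would dispose of the trivial obstructions: if some vertex $v$ has an empty list the instance is immediately infeasible, and any vertex with a singleton list has its color forced, so after recording these forced assignments I may assume that every remaining vertex $v$ carries a list of exactly two colors, say $L(v)=\{a_v,b_v\}$.

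Next I would introduce one Boolean variable $z_v$ for each such vertex, with the convention that $z_v=\textsc{true}$ encodes the choice $\phi(v)=a_v$ and $z_v=\textsc{false}$ encodes $\phi(v)=b_v$. Under this encoding, for any color $c$ the statement ``$\phi(v)=c$'' corresponds to a single literal over $z_v$ (namely $z_v$ when $c=a_v$, and $\neg z_v$ when $c=b_v$), or to a constant when $c\notin L(v)$. The only way a list-respecting assignment can fail to be a proper coloring is for two adjacent vertices to receive a common color; accordingly, for every edge $uv\in E(G)$ and every color $c\in L(u)\cap L(v)$ I would add the clause $\neg(\phi(u)=c)\vee\neg(\phi(v)=c)$, which by the encoding is a disjunction of at most two literals. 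When a forced (singleton-list) vertex conflicts with a size-two neighbor, the corresponding constraint reduces to a single literal, which I would pad into a valid 2-clause in the usual way.

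Finally I would verify the equivalence in both directions: a satisfying assignment of the resulting 2-SAT formula, combined with the recorded forced colors, yields a coloring respecting the lists, since every potential monochromatic edge is forbidden by exactly one clause and no further constraints are imposed; conversely, any list-respecting coloring determines a truth assignment satisfying all clauses. Because the number of variables is at most $|V(G)|$ and the number of clauses is polynomial in $|V(G)|$ and the number of colors, and 2-SAT is decidable in linear time, the whole procedure runs in polynomial time.

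The thing to get right here is not a genuine mathematical obstacle but the bookkeeping of the encoding: one must ensure that each edge constraint really collapses to at most two literals and that vertices with lists of size one are folded in consistently, so that no clause inadvertently acquires a third literal. Once the encoding is pinned down, both correctness and the polynomial running-time bound follow immediately.
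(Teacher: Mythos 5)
The paper does not prove this lemma at all --- it is imported as a known result from Edwards' paper --- and your reduction to 2-SAT is precisely the standard argument behind it, carried out correctly: one Boolean variable per vertex with a two-element list, one (at most binary) clause per edge and shared color, with singleton lists folded in as unit clauses and empty lists or two identically forced adjacent vertices reported as immediate infeasibility. The encoding, the two-directional equivalence, and the polynomial bound are all sound, so the proposal is a correct and complete proof of the cited lemma.
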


Randerath and Schiermeyer \cite{Schiermeyer} proved that one can decide in polynomial time that if a $P_6$-free
graph is 3-colorable.  Very recently Broersma et al. \cite{Fomin}
proved that the same applies to list coloring.
\begin{lemma}[\cite{Fomin}]\label{3-list col P_6}
LIST 3-COLORING can be solved in polynomial time for $P_6$-free graphs.
\end{lemma}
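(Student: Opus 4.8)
The plan is to reduce each instance to a bounded number of $2$-list-coloring instances and then apply Lemma~\ref{2-SAT}. The engine is a \emph{small or highly structured dominating set}: if $D$ dominates $G$ and we fix a list-respecting coloring of $D$, then every vertex outside $D$ has a neighbor in $D$ whose color it must avoid, so its list drops to size at most $2$, and the residual instance is solved by Lemma~\ref{2-SAT}. I may assume $G$ is connected and treat components separately. The structural input I would use is the theorem of van\,'t~Hof and Paulusma that every connected $P_6$-free graph has either a dominating induced $C_6$ or a dominating complete bipartite subgraph $A\cup B$, with $A$ complete to $B$.

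In the first case the dominating set has only six vertices, so there are at most $3^6$ list-respecting colorings of it, and I would enumerate all of them. For each such coloring, domination forces every remaining list down to size at most $2$, and Lemma~\ref{2-SAT} decides the rest. This branch incurs only a constant-factor blow-up and is plainly polynomial.

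In the complete bipartite case I would exploit that, since $A$ is complete to $B$, the set of colors appearing on $A$ is disjoint from the set appearing on $B$ in any proper $3$-coloring; with only three colors available, this forces at least one side to use a single color. I would therefore branch over the constant number of admissible color-splits $(S_A,S_B)$ with $S_A\cap S_B=\varnothing$. Each branch pins one side, say $B$, to a single color $c$ (discarding the branch if some vertex of $B$ cannot take $c$) and restricts the other side to the at most two colors of $S_A$. Every vertex adjacent to $B$ then loses $c$ and acquires a list of size at most $2$, and when \emph{both} sides are forced the entire instance collapses to a $2$-list-coloring instance as before.

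The main obstacle is precisely the vertices that are dominated \emph{only} by the two-colored side: after the branching above these retain lists of size $3$, so the instance does not immediately reduce to Lemma~\ref{2-SAT}. Handling them is where the real work lies. The approach I would take is to study the set $W$ of such vertices through $P_6$-freeness: each $w\in W$ is non-adjacent to the whole forced side, which yields short induced paths that would extend to a forbidden $P_6$ if $W$ interacted richly with the rest of the graph; this should bound that interaction and let me either cut the surviving lists down to size $2$ or recurse into a polynomially bounded collection of smaller $P_6$-free subinstances. Getting this finer structural analysis right is the crux, and the resulting lemma is exactly the subroutine that the $(P_6,P)$-free $4$-coloring algorithm of the next section will call.
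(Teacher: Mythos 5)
The paper does not prove this lemma at all: it is quoted as a known result of Broersma, Fomin, Golovach and Paulusma \cite{Fomin}, so there is no in-paper argument to compare yours against. Judged on its own merits, your outline follows the strategy that is actually used in the literature for (list) 3-coloring of $P_6$-free graphs: invoke the theorem of van~'t~Hof and Paulusma that a connected $P_6$-free graph has a dominating induced $C_6$ or a dominating complete bipartite subgraph, branch over the constantly many list-respecting colorings (or color-splits) of the dominating structure, and finish via the reduction to 2-list-coloring of Lemma~\ref{2-SAT}. The dominating $C_6$ case and the observation that the two sides of $A\cup B$ must use disjoint color sets, hence one side is monochromatic, are both correct.

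There is, however, a genuine gap exactly where you admit the ``real work lies.'' In the dominating complete bipartite case, after one side $B$ is pinned to a single color, the vertices dominated only by the two-colored side $A$ retain lists of size $3$, and nothing in your sketch actually reduces them: you only assert that $P_6$-freeness ``should bound'' their interaction with the rest of the graph. This is not a routine detail to be filled in; it is the heart of the Randerath--Schiermeyer argument for 3-colorability of $P_6$-free graphs and of its list-coloring extension in \cite{Fomin}, and it requires a further, fairly delicate structural analysis of how these vertices attach to $A$ and to each other, together with additional branching and propagation steps, before Lemma~\ref{2-SAT} can be applied. As written, your argument settles the easy cases and correctly locates, but does not close, the hard one, so the lemma cannot be considered proved from your sketch.
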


Now we are ready to prove our main result in this section.
\begin{theorem}\label{poly P_6}
4-COLORING is polynomially solvable for $(P_6,P)$-free graphs.
\end{theorem}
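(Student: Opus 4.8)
The plan is to reduce 4-COLORING on a $(P_6,P)$-free graph $G$ to polynomially many instances that can be disposed of by Lemma~\ref{2-SAT} (2-SAT) and Lemma~\ref{3-list col P_6} (LIST 3-COLORING for $P_6$-free graphs). First I would clear away the easy cases: we may assume $G$ is connected (otherwise treat each component separately), that $\omega(G)\le 4$ (a $K_5$ certifies a ``no'' instance), and that $G$ contains a triangle, since a triangle-free $P_6$-free graph is always 4-colorable and we may then answer ``yes''. The engine of the reduction is the standard device of fixing the colors on a small, well-chosen \emph{backbone} set $D$: once the vertices of $D$ are colored, every other vertex $v$ loses the colors appearing on $N(v)\cap D$, so its list of available colors shrinks. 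If we can force every list down to size at most $2$ we finish with Lemma~\ref{2-SAT}, and any genuinely $3$-colored residue is finished with Lemma~\ref{3-list col P_6}.

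The choice of backbone is where $P$-freeness enters, and I would split on the presence of an induced $C_4$. If $G$ is $C_4$-free then $G$ is $(P_6,C_4)$-free and we are already done by the result of Golovach et al.\ quoted above. Otherwise $G$ contains an induced $C_4$, say $C=a_1b_1a_2b_2$, and now $P$-freeness is decisive: no vertex may be adjacent to exactly one vertex of $C$, so every vertex outside $C$ is adjacent to $0$ or to at least $2$ of its vertices. I would first argue, using connectivity together with $P_6$-freeness to control the ``far'' vertices that see none of $C$, that one may take $C$ (or a dominating complete bipartite subgraph $K_{A,B}$ built around it) to dominate $G$. Since the two sides $A$ and $B$ of such a bipartite backbone are complete to each other, in any $4$-coloring they receive disjoint sets of colors; hence there are only $O(1)$ ways to split $\{1,2,3,4\}$ between the two sides, and a vertex with neighbors on both sides automatically loses colors from two disjoint palettes, ending with a list of size at most $2$. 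Enumerating the $O(1)$ relevant colorings (equivalently, palette-splittings) of the backbone then reduces $G$ to a bounded number of list-coloring instances on the $P_6$-free graph $G-D$.

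The main obstacle is to guarantee that after fixing the backbone \emph{every} list really has size at most $2$ (or lies inside a common set of three colors). The dangerous vertices are those attached to only one non-adjacent pair of the $C_4$ --- equivalently, to only one side of the bipartite backbone --- since a monochromatic such pair removes only a single color and leaves a list of size $3$ over the four colors. Handling these is exactly the point at which $P$-freeness must be pushed further: I expect to show, again by exhibiting a forbidden induced $P$ or $P_6$, that a one-sided attachment forces enough additional adjacencies (hence additional forbidden colors) that the offending lists either collapse to size $2$ or are confined to three colors, so that Lemmas~\ref{2-SAT} and~\ref{3-list col P_6} close the argument. This is the step where the extra hypothesis does the work that is provably unavailable for general $P_6$-free graphs, and where I would expect the bulk of the case analysis to live.
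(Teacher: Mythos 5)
You have correctly identified the toolkit this theorem needs (enumerate the colorings of a small structured set, then finish with Lemma~\ref{2-SAT} on vertices whose lists have size at most $2$ and with Lemma~\ref{3-list col P_6} on vertices confined to three colors), and your observation that no vertex can be adjacent to exactly one vertex of an induced $C_4$ is sound. But the two steps on which the reduction actually rests are asserted rather than proved, and they are exactly where the difficulty of the theorem lives. First, the dominating backbone: the known domination theorem for connected $P_6$-free graphs yields a dominating induced $C_6$ \emph{or} a dominating (not necessarily induced, and not of bounded size) complete bipartite subgraph, so you would still have to treat the $C_6$ alternative, and in the bipartite alternative the backbone $K_{A,B}$ can be arbitrarily large. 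That breaks the enumeration: you can enumerate the $O(1)$ palette splittings $(S_A,S_B)$, but when $|S_A|=2$ a splitting does not determine which color is forbidden at a neighbor of a single vertex of $A$, so the lists of the remaining vertices are not actually pinned down and no bounded family of list-coloring instances results. Second, and decisively, the vertices attached to one monochromatic non-adjacent pair of the backbone retain lists of size $3$, and your proposal explicitly defers the argument that these collapse (``I expect to show\ldots''). That deferred step is the theorem; without it the plan restates the problem rather than solving it.

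For comparison, the paper takes a different structural route: after reducing to $2$-connected, $K_5$-free, non-perfect graphs, it invokes the Strong Perfect Graph Theorem (with Berge-graph recognition) to find an induced $C_5$ or $\overline{C_7}$, and organizes the case analysis around that odd structure rather than a $C_4$. In the $\overline{C_7}$ case it shows $S_0=S_1=\emptyset$ and that every remaining list drops to size at most $2$ after enumerating the $4^7$ colorings of the seven-vertex core; in the $C_5$ case it proves that $S_3$ and $S_4$ have bounded size, enumerates the at most $4^{25}$ colorings of $C\cup S_3\cup S_4$, and then disposes of $S_5$, $S_2$, $S_0$ and $S_1$ through a delicate sequence of claims splitting the remainder between Lemma~\ref{2-SAT} and Lemma~\ref{3-list col P_6}. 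Your $C_4$-centred plan would require an analysis of at least this delicacy precisely at the points you have left open, so as written the proposal has a genuine gap.
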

\begin{proof}
Let $G$ be a $(P_6,P)$-free graph.
It is easy to see that  $G$ is $k$-colorable if and only if every block of $G$ is $k$-colorable.
In addition, all blocks of $G$ can be found in linear time using depth first search.
So we may assume that $G$ is 2-connected. Also, we assume that $G$ is $K_5$-free.
Otherwise $G$ is not 4-colorable; and it takes $O(n^5)$ time to detect such a $K_5$.
We proceed by appealing to the polynomial time algorithm in \cite{Chundnovsky}
to distinguish two cases, according to whether or not $G$ is perfect.
If $G$ is perfect, one can even optimally color $G$ in polynomial time \cite{Lovasz}.
So, we may assume that $G$ is not perfect. By the Strong Perfect Graph Theorem \cite{SPGT}
and the fact that $G$ is $(P_6,P)$-free, $G$ must contain an induced $C_5$ or the complement of $C_7$.
We suppose first that $G$ is $C_5$-free. Now let $C$ be a complement of
an induced $C_7=v_0v_1\ldots v_6$.

We call a vertex $v\in V\setminus C$ an {\em $i$-vertex} if $|N(v)\cap C|=i$.
Let $S_i$ be the set of $i$-vertices where $0\le i\le 7$. Note that $G=V(C)\cup \bigcup_{i=0}^{7} S_i$.
For any $X\subseteq C$ we define $S(X)$ to be the vertices outside $C$ that has $X$ as their neighborhood on $C$.
Note that each $v_i$ is contained in an induced $C_4$ with other vertices on $C$.
Consequently, $S_1=\emptyset$ since $G$ is $P$-free. Further, $S_2(v_i,v_{i+1})=\emptyset$ or $v_iv_{i+2}v_{i-1}v_{i+1}t$ would induce a $C_5$ where $t$ is a vertex in $S_2(v_i,v_{i+1})$.
Next we shall prove that $S_0$ is anti-complete to $S_i$ for any $i\ge 2$. 
It is easy to check that
$S_0$ is anti-complete to $S_2(v_i,v_{i+2})$ and $S_2(v_i,v_{i+3})$ since $G$ is $P$-free.
Let $y\in S_0$ and assume that $y$ is adjacent to some vertex $x\in S_p$ for $p\ge 3$.
Suppose that $x\in S_3$. If $N_C(x)$ does not induce a triangle then there exist some index $i$
such that $v_{i-3}$ and $v_{i+3}$ belong to $N(x)$. Thus one of 
$\{v_{i-1},v_i,v_{i+1}\}$ is not adjacent to $x$, say $v_i$. Hence, $xv_{i-3}v_iv_{i+3}$ and the edge $yx$
induce a $P$ in $G$. So $N_C(x)$ is a triangle. Without loss of generality,
we may assume that $N(x)=\{v_{i-2},v_i,v_{i+2}\}$ for some $i$. 
Then $yxv_{i-2}v_{i-4}v_{i-1}v_{i-3}$ would induce a $P_6$.
Now suppose that $x$ is a $p$-vertex where $p\ge 4$. Note that $S_7=\emptyset$ or $G$
is not 4-colorable. If $x\in S_6$ then we may assume that $N_C(x)=C\setminus \{v_i\}$
and so $xv_{i-3}v_iv_{i+3}$ and $yx$ would induce a $P$. Suppose $x\in S_4$.
As $\omega(C)=3$ there are two vertices in $N_C(x)$ that are nonadjacent.
We may assume that they are $v_{i-3}$ and $v_{i+3}$. As $|N_C(x)|=4$ one of
$\{v_{i-1},v_i,v_{i+1}\}$ is not adjacent to $x$, say $v_i$. Hence, $xv_{i-3}v_iv_{i+3}$
and $yx$ induce a $P$ in $G$. Finally, $x\in S_5$ and we may assume that
$N_C(x)=\{v_{i-1},v_i,v_{i+1},v_{i-3},v_{i+3}\}$ otherwise we apply the argument in the case of $x\in S_4$.
But now $xv_{i-1}v_{i+2}v_i$ and $yx$ would induce a $P$.
Therefore, $S_0$ is anti-complete to $S_i$ for any $i\ge 2$. 
Hence, $S_0=\emptyset$ by the connectivity of $G$.
Now we can try all (at most $4^7$) possible 4-coloring of $C$ and for each such coloring we can
reduce the problem to 2-SAT by Lemma \ref{2-SAT}.

Now we assume that $C=v_0v_1v_2v_3v_4v_0$ is an induced $C_5$ in $G$.
We define $i$-vertex and $S_i$ as above.
By the $P$-freeness of $G$ we have the following facts.

$\bullet$ $S_5$ must be an independent set or $G$ is not 4-colorable.

$\bullet$ If $v\in S_3$, then $N_C(v)$ induces a $P_3$ and if $v\in S_2$, then $N_C(v)$ induces a $P_2$.

In the following all indices are modulo 5.
Let $S_3(v_i)$ be the set of 3-vertices whose neighborhood on $C$ is $\{v_i,v_{i-1},v_{i+1}\}$
and $S_2(v_i)$ be the set of 2-vertices whose neighborhood on $C$ is $\{v_{i-2},v_{i+2}\}$.
We also define $S_1(v_i)$ and $S_4(v_i)$ to be the set of 1-vertices and 4-vertices that has $v_i$ as their unique neighbor and non-neighbors on $C$, respectively.
Clearly, $S_p=\bigcup_{i=0}^{4} S_p(v_i)$ for $1\le p\le 4$.
It follows from the $P$-freeness of $G$ that $S_3(v_i)$ and $S_4(v_i)$ are cliques for each $i$.
So $|S_3(v_i)|\le 2$ and $|S_4(v_i)|\le 2$ since $G$ is $K_5$-free.
Hence, $|C\cup S_4\cup S_3|\le 25$ and  there are at most $4^{25}$ different 4-colorings of $C\cup S_4\cup S_3$.
Clearly, $G$ is 4-colorable if and only if there exists at least one such coloring that can be extended
to $G$. Therefore, it suffices to explain how to decide if a given 4-coloring $\phi$ of $C\cup S_4\cup S_3$ can be
extended to a 4-coloring of $G$ in polynomial time. Equivalently, we want to decide in polynomial time
if $G$ admits a list 4-coloring with input lists as follows.
\[L(v)=\left\{
\begin{array}{ll}
\{1,2,3,4\} & \mbox{if $v\notin C\cup S_4\cup S_3$.} \\
\phi(v) & \mbox{otherwise}
\end{array}
\right.\]

We say that vertices with list size 1 have been {\em pre-colored}.
Now we {\em update} the graph as follows.
For any pre-colored vertex $v$ and any $x\in N(v)$
we remove color $\phi(v)$ from the list of $x$, i.e., let $L(x):=L(x)\setminus \{\phi(v)\}$.
It is easy to see that $|L(x)|=1$ for each $x\in S_5$ and
$|L(x)|\le 2$ for any $x\in S_2$ after updating the graph.
Next we consider 0-vertices.

\noindent {\bf Claim B.} {\em $S_0$ is anti-complete to $S_1\cup S_2\cup S_4$.
In addition, any two 0-vertices that lie in
the same component of $S_0$ have exactly same neighbors in $S_3$.}

\noindent {\em Proof of Claim B.} The first claim follows directly from the $(P_6,P)$-freeness of $G$.
To prove the second claim it suffices to show that $N_{S_3}(x)=N_{S_3}(y)$
holds for any edge $xy\in E$ in $S_0$. By contradiction assume that there exists an edge
$xy$ in $S_0$ such that $x$ has a neighbor $z$ in $S_3$ with $yz\notin E$. Without loss of generality,
we assume $z\in S_3(v_0)$. Then $yxzv_1v_2v_3$ would induce a $P_6$ in $G$. \qed

Let $A$ be an arbitrary component of $S_0$. Suppose that $A$ is anti-complete to $S_3$.
Since $G$ is 2-connected, $A$ has at least two neighbors in $S_5$. In particular, $|S_5|\ge 2$.
Let $x$ and $y$ be two 5-vertices. Then any vertex $t\in A$ is either complete or anti-complete
to $S_5$ or we may assume that $t$ is adjacent to $x$ but not to $y$ and hence $v_0xv_2y$ plus
$tx$ would induce a $P$ in $G$. Now let $A'\subseteq A$ be the set of vertices that are complete to
$S_5$. Note that $A'\neq \emptyset$. Suppose that $A'\neq A$ and so there exist $t\in A\setminus A'$.
By the connectivity of $A$ we may assume that $t$ has a neighbor $t'\in A'$ and hence $v_0xt'y$ plus
$tt'$ would induce a $P$. Therefore, $A$ is complete to $S_5$. Thus the color of vertex in
$S_5$ does not show up on each vertex of $A$
after updating the graph and we can employ Lemma \ref{3-list col P_6}.
So we may assume that $A$ has a neighbor in $S_3$, say $x$.
By Claim B, $\phi(x)$ does not appear in the lists of vertices
in $A$ at all. So, we can decide if $\phi$ can be extended to $A$ in polynomial time
by Lemma \ref{3-list col P_6}. Since $S_0$ has at most $n$ components, it takes polynomial
time to check if $\phi$ can be extended to $S_0$.

Now we consider 1-vertices. Our goal is to branch on a subset of vertices in either $S_1$
or $S_2$ in such a way that after branching the vertices in $S_1$ that are not pre-colored are
anti-complete to the vertices in $S_2$ that are not pre-colored.
We want to accomplish such branching with only polynomial cost.
If we do achieve that then we can decide in polynomial time if $\phi$ can be extended to
$S_1$ and $S_2$ (independently) by applying Lemmas \ref{3-list col P_6} and \ref{2-SAT},
respectively. Therefore, in the following we focus on branching procedure and
refer to applying Lemmas \ref{3-list col P_6} and \ref{2-SAT} to $S_1$ and $S_2$
by saying "we are done". We start with the properties of $S_1(v_i)$'s.

\noindent {\bf Claim C.}
{\em $S_1(v_i)$ is complete to $S_1(v_{i+2})$ and anti-complete to $S_1(v_{i+1})$.
Further, if $S_1(v_i)$ and $S_1(v_{i+2})$ are both non-empty, then
$|S_1(v_i)|\le 3$ and $|S_1(v_{i+2})|\le 3$.}

\noindent {\em Proof of Claim C.} Without loss of generality, it suffices to prove the claim
for $S_1(v_0)$. Let $x\in S_1(v_0)$, $y\in S_1(v_1)$, and $z\in S_1(v_2)$.
If $xz\notin E$, then $xv_0v_4v_3v_2y$ would be an induced $P_6$ in $G$.
If $xy\in E$, then $v_0xyv_1v$ plus $v_1v_2$ would be an induced $P$.
Thus the first claim follows. Now suppose $|S_1(v_0)|\ge 4$.
Then $S_1(v_0)$ contains two nonadjacent vertices
$x$ and $x'$ since $G$ is $K_5$-free. Now $xv_0x'zx$ plus the edge $v_4v_0$
would be an induced $P$. \qed

It follows from Claim C that we can pre-color all 1-vertices if at least four $S_1(v_i)$ are non-empty,
or exactly three $S_1(v_i)$ are non-empty and three corresponding $v_i$'s induce a $P_2+P_1$ in $G$,
or exactly two $S_1(v_i)$ are non-empty and two corresponding $v_i$'s are non-adjacent.
In all these cases we update the graph and we are done.
The remaining cases are:
(1) exactly one $S_1(v_i)\neq \emptyset$;
(2) exactly two $S_1(v_i)$ are non-empty and two corresponding $v_i$'s are adjacent;
(3) exactly three $S_1(v_i)$ are non-empty and three corresponding $v_i$'s induce a $P_3$.

\noindent {\bf Claim D.} {\em $S_1(v_i)$ is anti-complete to all $S_2(v_j)$ for $j\neq i$. In addition,
if both $S_1(v_i)$ and $S_1(v_{i+1})$ are non-empty, then $S_1(v_i)$ is also anti-complete to $S_2(v_i)$. }

\noindent {\em Proof of Claim D.} It suffices to prove the claim for $S_1(v_0)$.
Let $x\in S_1(v_0)$, $y\in S_2(v_1)$ and $z\in S_2(v_2)$.
If $xy\in E$, then $xv_0v_4y$ plus $v_0v_1$ would induce a $P$.
If $xz\in E$, then $v_1v_2v_3v_4zx$ would induce a $P_6$.
By symmetry, the first part of the claim follows.
Suppose now $S_1(v_0)$ and $S_1(v_{1})$ are both non-empty.
Let $x\in S_1(v_0)$, $y\in S_1(v_1)$ and $z\in S_2(v_0)$.
If $xz\in E$, then $yv_1v_0xzv_3$ would induce a $P_6$ in $G$. \qed

It follows from Claim D that in the case (2) or (3)
(we can pre-color two of three $S_1(v_i)$'s and update the graph)
the 1-vertices that are not pre-colored are anti-complete to 2-vertices.
Note also that in the case (2) the two non-empty $S_1(v_i)$'s are anti-complete to each other.
So we are done in these two cases.
Finally, we assume $S_1(v_0)\neq \emptyset$ and $S_1(v_i)=\emptyset$ for $i\neq 0$.
If $|S_2(v_0)|\le 2$, then we pre-color it, update the graph and we are done.
So assume that $|S_2(v_0)|\ge 3$. If there is no edge between $S_1(v_0)$ and $S_2(v_0)$,
then we are done by Claim D.

Hence, assume that there is at least one edge between $S_1(v_0)$ and $S_2(v_0)$.

\noindent {\bf Claim E.} {\em $S_2(v_0)$ is a star.}

\noindent {\em Proof of Claim E.} Let $y\in S_2(v_0)$ be a neighbor of some vertex
$x\in S_1(v_0)$. Suppose $y'\in S_2(v_0)$ is not adjacent to $y$. Consider $y'v_2yxv_0v_4$.
Since $G$ is $P_6$-free, we have $xy'\in E$ and thus $xyv_2y'$ plus $v_0x$ induces a $P$.
Therefore, $y$ is adjacent to any other vertex in $S_2(v_0)$. Thus $S_2(v_0)$ is a star
since $G$ is $K_5$-free. \qed

By Claim E we can pre-color $S_2(v_0)$ since there are exactly two such colorings.
Finally we update the graph and we are done. Therefore, in any case we can decide in
polynomial time if $\phi$ can be extended to $S_1$ and $S_2$, and so the proof is complete.
\qed
\end{proof}

\section{Concluding Remarks}
We have proved that 4-COLORING is NP-complete for $P_7$-free graphs, and that
5-COLORING is NP-complete for $P_6$-free graphs. These two results
improve Theorems \ref{4-COL} and \ref{5-COL} obtained by Broersma et al. \cite{Fomin,Broersma}.
We have used a reduction from 3-SAT and establish a general framework. The construction
and the proof are simpler than those in previous papers.
As pointed out in Section 3,
however, they do not apply to 4-COLORING $P_6$-free graphs. On the other hand,
Golovach et al. \cite{Golovach} completed the dichotomy classification for 4-COLORING $H$-free graphs
when $H$ has at most five vertices.
The classification states that 4-COLORING is polynomially solvable for $H$-free graphs when $H$ is a linear
forest and is NP-complete otherwise. Note that linear forests on at most five vertices are all induced
subgraph of $P_6$. Thus, all the polynomial cases from \cite{Golovach} are for subclasses of
$P_6$-free graphs. We conjecture that it can be decided in polynomial time
if a $P_6$-free graph is 4-colorable.
\begin{conjecture}\label{4-col P_6}
4-COLORING can be solved in polynomial time for $P_6$-free graphs.
\end{conjecture}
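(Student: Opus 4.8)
The plan is to keep the high-level skeleton of the proof of Theorem~\ref{poly P_6} but to discharge, by genuine structural arguments about $P_6$-free graphs, every step that there relied on $P$-freeness. First I would reduce to $G$ connected, $2$-connected and $K_5$-free (a $K_5$ is found in $O(n^5)$ time and rules out $4$-colourability), and test perfectness via~\cite{Chundnovsky}; a perfect $K_5$-free graph has $\chi=\omega\le 4$ and is coloured by~\cite{Lovasz}. If $G$ is imperfect I still want to reduce to an induced $C_5$ or $\bar{C_7}$, and this is legitimate \emph{even without} $P$: the only odd hole a $P_6$-free graph can contain is $C_5$, since an odd hole of length at least $7$ contains an induced $P_6$; all antiholes $\bar{C_n}$ are $P_6$-free, but $\chi(\bar{C_{2k+1}})=k+1$, so any odd antihole longer than $\bar{C_7}$ forces $\chi(G)\ge 5$. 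Hence I search directly for an induced $C_5$ and an induced $\bar{C_7}$ (in $O(n^5)$ and $O(n^7)$ time); if $G$ is imperfect yet contains neither, its only obstructions are odd antiholes of length at least $9$ and I output ``no''. The representative and harder configuration is an induced $C_5=v_0v_1v_2v_3v_4$, on which I classify the outside vertices by their trace on $C$ into the classes $S_i$ and $S_p(v_i)$ exactly as in Theorem~\ref{poly P_6}.

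The essential obstacle is that, without $P$, the classes $S_2(v_i),S_3(v_i),S_4(v_i)$ are no longer cliques of bounded size: they may be arbitrarily large and internally arbitrary, so enumerating all $4^{|C\cup S_3\cup S_4|}$ colourings of the first neighbourhood is unavailable. The plan is therefore to abandon direct enumeration and instead to precolour a \emph{dominating} structure. Concretely, I would seek a structural theorem asserting that a connected $P_6$-free graph has a dominating subgraph $D$ of bounded ``coloured complexity''---say a dominating clique or one of finitely many small configurations built around the $C_5$---so that there are only polynomially many admissible $4$-colourings of $D$ and, for each, $P_6$-freeness forces a proper restriction of the list of every vertex outside $D$. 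For a fixed colouring of $D$ I would then \emph{update} the graph as in Theorem~\ref{poly P_6}, deleting from $L(x)$ each colour used on $N(x)\cap D$, with the explicit goal of driving almost every list down to size at most $2$ (so that Lemma~\ref{2-SAT} disposes of those vertices by $2$-SAT) while the residual list-$3$ vertices lie in an induced $P_6$-free subgraph on which Lemma~\ref{3-list col P_6} applies. Crucially, the precoloured set must be chosen so that no resulting subinstance is a general size-$4$ list instance: recall from Theorem~\ref{5-COL} that even $5^*$-COLORING is already NP-hard for $P_6$-free graphs, so the reduction must never hand Lemma~\ref{3-list col P_6} or Lemma~\ref{2-SAT} an instance whose lists it cannot absorb.

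The technical heart would be to replace Claims~B--E by their correct, \emph{unrestricted} analogues. Instead of proving each $S_p(v_i)$ is a clique, I would prove that after the dominating precolouring $P_6$-freeness forces each class to partition into a \emph{bounded} number of homogeneous blocks---vertices that see the same already-coloured neighbours and hence carry identical lists---so that each block is a single list-colouring subinstance rather than an enumeration target. The class $S_0$ is handled by $2$-connectivity as before: each component of $S_0$ either attaches to the precoloured structure, whence a colour is stripped from all its lists and Lemma~\ref{3-list col P_6} finishes it, or is forced by long induced paths to be homogeneous toward its boundary. The final decoupling of the surviving list-$2$ region from the surviving list-$3$ region---so that Lemmas~\ref{2-SAT} and~\ref{3-list col P_6} may be applied independently---would be achieved by branching on only polynomially many high-degree ``separator'' vertices. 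The entire programme must then be repeated, with its own homogeneity lemmas, for the $\bar{C_7}$ configuration (where $\omega=3$).

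I expect the main obstacle to be exactly the step stated most glibly above: proving that a \emph{polynomial} family of colourings of a dominating structure captures all $4$-colourings of $G$, that is, taming the arbitrarily large and unstructured classes $S_p(v_i)$ in the absence of the rigidity that $P$-freeness supplied. This is where the real difficulty of the conjecture lives. The interaction among list-$3$ vertices is not a $2$-SAT instance, and general LIST $4$-COLORING cannot be assumed tractable on $P_6$-free graphs, so the argument must exploit in an essentially non-local way the fact that every still-uncoloured vertex begins with the full palette $\{1,2,3,4\}$. A complete proof therefore seems to demand a new decomposition theorem for $P_6$-free graphs going well beyond the local neighbourhood analysis that sufficed for $(P_6,P)$-free graphs, which is why I record the statement only as a conjecture.
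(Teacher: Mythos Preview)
The paper does not prove this statement: it is recorded there precisely as a conjecture, with Theorem~\ref{poly P_6} offered only as partial evidence. There is therefore no paper proof to compare your proposal against.

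Your write-up is not a proof either, and you say so yourself in the last paragraph. What you have produced is a research outline: reduce to imperfect, $2$-connected, $K_5$-free; locate a $C_5$ or $\bar{C_7}$; try to precolour a dominating structure so that the residual instance splits into a LIST $3$-COLORING part (handled by Lemma~\ref{3-list col P_6}) and a $2$-list part (handled by Lemma~\ref{2-SAT}). That is a sensible plan, and the reduction to $C_5$/$\bar{C_7}$ via the Strong Perfect Graph Theorem is correct for $P_6$-free graphs even without $P$, exactly as you argue. You also correctly identify the genuine obstruction: without $P$-freeness the neighbourhood classes $S_p(v_i)$ need not be cliques or bounded, so the enumeration step that carries Theorem~\ref{poly P_6} simply does not exist, and one cannot fall back on LIST $4$-COLORING because already $5^*$-COLORING is hard on $P_6$-free graphs.

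So there is no error to name, but also no proof: the ``structural theorem asserting that a connected $P_6$-free graph has a dominating subgraph $D$ of bounded coloured complexity'' and the ``bounded number of homogeneous blocks'' claims are exactly the missing ingredients, and you have not supplied them. Your proposal is consistent with the paper's own stance---the statement is open---and should be read as motivation for the conjecture rather than as a proof attempt.
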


We have proved that Conjecture \ref{4-col P_6} is true for $(P_6,P)$-free graphs, a subclass of $P_6$-free graphs.
Our proof makes use of certain ideas of Le et al. \cite{Randerath 2007} who
proved that 4-COLORING is polynomially solvable for $(P_5,C_5)$-free graphs,
and it also suggests new techniques that may be useful.
Furthermore,  Theorem \ref{poly P_6} may be interesting in its own right.
It suggests a new research  direction, namely
classifying the complexity of $k$-COLORING $(P_t, C_l)$-free
graphs for every integer combination of $k$, $l$ and $t$.
Since $k$-COLORING is NP-complete for $P_t$-free graphs
for even small $k$ and $t$, say Theorems \ref{Mine 1} and \ref{Mine 2},
it would be nice to know whether or not
forbidding short induced cycles makes problem easier (see, e.g., \cite {Huang}).
A recent result of Golovach et al. \cite{Short Cycle} showed that
forbidding $C_4$ does make problem easier.
In contrast, Golovach et al. \cite{Short Cycle}
showed that 4-COLORING is NP-complete for $(P_{164},C_3)$-free graphs.
They also determined a lower bound $l(g)$ for any fixed $g\ge 3$ such that
every $P_{l(g)}$-free graph with girth at least $g$ is 3-colorable.
Note that the girth condition implies the absence of all induced cycles
of length from 3 to $g-1$ in the graph.
Therefore, the last result can be viewed as
an answer to a restricted version of the problem we have formulated.

{\bf Acknowledgement.}
The author is grateful to his supervisor Pavol Hell for
many useful conversations related to these results
and helpful suggestions for greatly improving the presentation of the paper.
The author is grateful for Daniel Paulusma for communicating results from
\cite{Short Cycle} and valuable comments.


\begin{thebibliography}{99}

\bibitem{Bondy} Bondy, J.A., Murty, U.S.R.: Graph Theory. In: Springer Graduate Texts in Mathematics,
vol. 244 (2008).

\bibitem{Fomin} Broersma, H.J., Fomin, F.V., Golovach, P.A., Paulusma, D.:
Three complexity results on coloring $P_k$-free graphs.
European Journal of Combinatorics, 2012 (in press).

\bibitem{2P_3} Broersma, H.J., Golovach, P.A., Paulusma, D., Song, J.:
Determining the chromatic number of triangle-free $2P_3$-free graphs in polynomial time.
Theoret. Comput. Sci. 423, 1--10 (2012).

\bibitem{Broersma} Broersma, H.J., Golovach, P.A., Paulusma, D., Song, J.: Updating the complexity
status of coloring graphs without a fixed induced learn forest. Theoret. Comput. Sci. 414,
9--19 (2012).

\bibitem{Chundnovsky} Chudnovsky, M., Cornu\'{e}jols,  G., Liu, X., Seymour, P., Vu\v{s}kovi\'{c}, K.:
Recognizing Berge graphs. Combinatorica 25, 143--187 (2005).

\bibitem{SPGT} Chudnovsky, M., Robertson, N., Seymour, P., Thomas, R.:
The strong perfect graph theorem. Annals of Mathematics 64, 51--229 (2006).

\bibitem{Dabrowski} Dabrowski, K., Lozin, V.V., Raman, R., Ries, B.:
Colouring vertices of triangle-free graphs without forests.
Discrete Math. 312, 1372--1385 (2012).

\bibitem{Edwards} Edwards, K.: The complexity of coloring problems on dense graphs.
Theoret. Comput. Sci. 43, 337--343 (1986).

\bibitem{Garey} Garey, M.R., Johnson, D.S.: Computers and Intractability:
A Guide to the Theory of NP-Completeness. Freeman San Faranciso, (1979).

\bibitem{Golovach} Golovach, P.A., Paulusma, D., Song, J.:
4-coloring $H$-free graphs when $H$ is small. Discrete Applied Mathematics, 2012 (in press).

\bibitem{Short Cycle} Golovach, P.A., Paulusma, D., Song, J.:
Coloring graphs without short cycles and long induced paths.
http://www.dur.ac.uk/daniel.paulusma/Papers/Submitted/girth.pdf, 2013.

\bibitem{Lovasz} Gr\"{o}tschel, M., Lov\'{a}sz, L., Schrijver, A.: Polynomial algorithms for perfect
graphs. Ann. Discrete Math. 21, 325--356 (1984). Topics on Perfect Graphs.

\bibitem{Hoang} Ho\`{a}ng, C.T., Kami\'{n}ski, M., Lozin, V.V., Sawada, J., Shu, X.:
 Deciding $k$-colorability of $P_5$-free graphs in polynomial time. Algorithmica 57, 74--81 (2010).

\bibitem{Holyer} Holyer, I.: The NP-completeness of edge coloring. SIAM J. Comput. 10, 718--720 (1981).

\bibitem{Huang} Hell. P, Huang, S.: Complexity of Coloring Graphs without Paths and Cycles.
 arXiv:1310.0340 [cs.DM].

\bibitem{Lozin} Kami\'{n}ski, M., Lozin, V.V.: Coloring edges and vertices of graphs without short or long cycles.
Contrib. Discrete. Mah. 2, 61--66 (2007).

\bibitem{Kral} Kr\'{a}l¡¯, D.,  Kratochv\'{i}l, J.,  Tuza, Zs., Woeginger, G.J.:
Complexity of coloring graphs without forbidden induced subgraphs.
In: Proceedings of WG 2001, in: LNCS, vol. 2204, 2001, pp. 254-262.

\bibitem{Randerath 2007}  Le, V.B., Randerath, B., Schiermeyer, I.:
On the complexity of 4-coloring graphs without long induced paths.
Theoret. Comput. Sci. 389, 330--335 (2007).

\bibitem{Leven} Leven, D., Galil, Z.: NP-completeness of finding the chromatic index of regular graphs.
J. Algorithm 4, 35--44 (1983).

\bibitem{Olariu} Olariu, S., Paw-free graphs. Infomation Processing Letters 28, 53--54 (1988).

\bibitem{Schiermeyer} Randerath, B., Schiermeyer, I.: 3-Colorability $\in$ P for $P_6$-free graphs.
Discrete Appl. Math. 136, 299--313 (2004).

\bibitem{Randerath} Randerath, B., Schiermeyer, I.: Vertex colouring and fibidden subgraphs-a survey.
Graphs Combin. 20, 1--40 (2004).

\bibitem{Tuza} Tuza, Zs.: Graph colorings with local restrictions-a survey. Discuss. Math. Graph Theory 17,
161--228 (1997).

\bibitem{Woe} Woeginger, G.J., Sgall, J.: The complexity of coloring graphs without long induced paths.
Acta Cybernet. 15, 107--117 (2001).

\end{thebibliography}
\end{document}